\def\calU{\mathcal{U}}
\def\calV{\mathcal{V}}
\def\calW{\mathcal{W}}
\def\trace#1{\mathrm{Tr} \left(#1 \right)}
\def\abs#1{\left|#1  \right|}
\def\norm#1{\left\| #1 \right\|}
\newcommand\pphi{\boldsymbol{\mathit{\phi}}}
\let\ao\aa
\def\aa{\pmb{\mathit{a}}}
\newcommand\ee{\boldsymbol{\mathit{e}}}
\newcommand\qq{\boldsymbol{\mathit{q}}}
\newcommand\rr{\boldsymbol{\mathit{r}}}
\newcommand\vv{\boldsymbol{\mathit{v}}}
\newcommand\ww{\boldsymbol{\mathit{w}}}
\newcommand\xx{\boldsymbol{\mathit{x}}}
\newcommand\zz{\boldsymbol{\mathit{z}}}
\def\tt{\boldsymbol{\mathit{t}}}
\newcommand\zzhat{\boldsymbol{\widehat{\mathit{z}}}}
\def\union{\cup}
\newdimen\pIR
\newcommand\StevesR{{\rm I\kern\pIR R}}
\def\bvec#1{\mathbf{#1}}
\def\defeq{\stackrel{\mathrm{def}}{=}}
\def\setof#1{\left\{#1  \right\}}
\def\sizeof#1{\left|#1  \right|}
\newsavebox{\Codefig}
\def\lprod{{\ooalign{\hfil\raise .00ex\hbox{\scriptsize L}\hfil\crcr\mathhexbox20D}}}
\newtheorem{theorem}{Theorem}[section]
\newtheorem{lemma}[theorem]{Lemma}
\theoremstyle{definition}
\def\ttssp{2-2 Set Splitting Problem}
\def\ttsspt{(3,2-2) Set Splitting Problem}
\title{Hardness Results for Weaver's Discrepancy Problem%
\thanks{
This work was supported in part by NSF Grant CCF-1562041, ONR Award N00014-20-1-2335, and a Simons Investigator Award to Daniel Spielman
}}
\author{
Daniel A. Spielman \\ 
Yale University
\and 
Peng Zhang\\
Rutgers University
}
\begin{document}

\maketitle

\begin{abstract}
Marcus, Spielman and Srivastava~\cite{marcusSS15}
  solved the Kadison--Singer Problem by proving a strong form of Weaver's conjecture:
  they showed that for all $\alpha > 0$ and all lists of vectors
  of norm at most $\sqrt{\alpha}$ whose outer products sum to the identity,
  there exists a signed sum of those outer products with operator norm at most 
  $\sqrt{8 \alpha} + 2 \alpha.$
We prove that it is NP-hard to distinguish such a list of vectors for which
  there is a signed sum that equals the zero matrix from those in which every signed sum 
  has operator norm at least $\kappa \sqrt{\alpha}$, for some absolute constant $\kappa > 0.$
Thus, it is NP-hard to construct a signing that is a constant factor better than that guaranteed to exist.  
  
For $\alpha = 1/4$, we prove that it is NP-hard to distinguish whether there is a signed sum that equals the zero matrix from the case in which every signed sum has operator norm at least $1/4$.
\end{abstract}

\section{Introduction}

Implicit in Weaver's~\cite{weaver04} conjecture $KS_{2}$ is the following discrepancy problem:
  given vectors $\vv_{1}, \ldots, \vv_{n}$, find an $\xx \in \setof{\pm 1}^{n}$ minimizing
  the operator norm of $\sum_{i} \xx (i) \vv_{i} \vv_{i}^{*}$, where $\vv_{i}^{*}$ is the conjugate \footnote{%
  While all vectors in this paper have Real entries, Weaver's conjecture remains natural over the Complexes.}
  transpose of $\vv_{i}$.
Weaver proved that conjecture $KS_{2}$ implies a positive resolution of the Kadison--Singer Problem.
It is equivalent\footnote{
Weaver's statement is slightly different from this, but he proves it is equivalent to this in part b of his Theorem 2. See the Remarks section for some explanation.
}
to the statement 
  that there are
  constants $\alpha > 0$ and $\beta < 1$ such that for all vectors $\vv_{1}, \ldots, \vv_{n}$ that satisfy 
\[
  \norm{\vv_{i}}^{2} \leq \alpha, \text{for all $i$, and}
  \quad 
  \sum_{i = 1}^{n} \vv_{i} \vv_{i}^{*} = I,
\]
there exists a $\xx \in \setof{\pm 1}^{n}$ so that
\[
  \norm{\sum_{i} \xx (i) \vv_{i} \vv_{i}^{*} } \leq \beta.   
\]
Here, $\norm{\vv_{i}}$ refers to the standard Euclidean norm of a vector
  and the norm around the signed sum of outer products refers to the operator norm induced by the Euclidean vector norm:
\[
 \norm{M} = \max_{\tt : \norm{\tt} = 1} \norm{M \tt}.    
\]
Marcus, Spielman, and Srivastava~\cite{marcusSS15} solved the Kadison--Singer Problem by proving that Weaver's conjecture is true with $\beta = \sqrt{8 \alpha} + 2 \alpha$.
Their result was improved by Bownik, Casazza, Marcus, and Speegle \cite{BCMS19}, who reduced 
  the bound on $\beta$ to a little below $\sqrt{8 \alpha}$.

Neither of these results are accompanied by efficient algorithms, and many have wondered if there are efficient algorithms for choosing vectors $\xx$ that satisfy the conditions of Weaver's Conjecture.
The currently best known algorithm for constructing such an $\xx$
runs in time $O(2^{\sqrt[3]{n}/\alpha})$, and achieves $\beta$ arbitrarily close to $\sqrt{8 \alpha} + 2 \alpha$ \cite{AOSS18}.

In this paper, we prove that it is NP-hard to distinguish between the cases in which there exists an $\xx$ that makes the signed sum of outer products the all-0 matrix from the case in which all $\xx$ result in a signed sum with operator norm at least $\kappa \sqrt{\alpha}$, for some constant $\kappa > 0$.


Before stating our results in more detail, we introduce some notation that makes those statements compact.
Given a list of vectors $\calV = \vv_{1}, \ldots, \vv_{n}$
  and a vector $\xx \in \setof{\pm 1}^{n}$, we let $M(\calV, \xx)$
  denote the signed sum of outer products,
\[
  \sum_{i} \xx (i) \vv_{i} \vv_{i}^{*}.
\]
When just given the list of vectors $\calV$, we define the minimum achievable operator norm of such a signed sum of outer products to be 
\[
  W(\calV) = \min_{\xx \in \setof{\pm 1}^{n}} 
    \norm{M(\calV, \xx) }.
\]
We say that a list of vectors $\vv_{1}, \ldots, \vv_{n}$ is $\alpha$-Weaver if
$\sum_{i} \vv_{i} \vv_{i}^{*} = I$ and $\norm{\vv_{i}}^{2} \leq \alpha$ for all $i$.
In this notation, Weaver's conjecture $KS_{2}$ says that for some $\alpha > 0$
  and $\beta < 1$, every $\alpha$-Weaver list of vectors $\calV$ satisfies 
  $W(\calV) < \beta$.

We prove that there is a constant $\kappa > 0$ such that for every  $\alpha > 0$ it is NP-hard 
  to distinguish $\alpha$-Weaver lists of vectors with $W(\calV) = 0$
    from those for which $W(\calV) \geq \kappa \sqrt{\alpha}$.
As we know $W(\calV) \leq  \sqrt{8 \alpha}$, this result is optimal up to the constant $\kappa$.
Our proof depends on the NP-hardness of approximating Max 2-2 Set Splitting \cite{guruswami04,CGW05}.
The factor $\alpha$ can depend on the number of vectors: 
  we only require $\alpha \geq \Omega(n^{-1/2})$. 
We begin by showing that for $1/4$-Weaver vectors $\calV$, it is NP-hard to distinguish whether $W(\calV) = 0$ or $W(\calV) \geq 1/4$.
Interestingly, this result only depends on the NP-hardness of 2-2 Set Splitting.

Our results are inspired by and analogous to the hardness results for Spencer's Discrepancy Problem established by Charikar, Newman, and Nikolov~\cite{CNN11}. Spencer~\cite{spencer85} proved that for vectors $\vv_{1}, \ldots, \vv_{n}$
  in $\setof{0, 1}^{n}$, there always exists a $\xx \in \setof{\pm 1}^{n}$ such that $\norm{\sum_{i} \xx (i) \vv_{i}}_{\infty} \leq 6 \sqrt{n}$.
Charikar, Newman, and Nikolov prove that there is a constant $c$ for which it is NP-hard to distinguish vectors for which this sum can be made zero from those for which the sum always has infinity norm at least $c \sqrt{n}$. 
We follow their lead in deriving hardness from the hardness of approximating Max 2-2 Set Splitting.
However, our reduction seems very different from theirs.
For Spencer’s discrepancy problem, the NP-hardness of approximating Max 2-2 Set Splitting immediately implies that it is NP-hard to distinguish vectors $\vv_1, \ldots, \vv_n \in \setof{0,1}^n$ for which there exists an $\xx \in \setof{\pm 1}^n$ satisfying $\sum_{i=1}^n \xx(i) \vv_i = {\bf 0}$ from those for which every $\xx \in \setof{\pm 1}^n$ must have $\norm{\sum_{i=1}^n \xx(i) \vv_i}_{\infty} \ge 2$. 
The main challenge of \cite{CNN11} is amplifying this discrepancy gap from $0$ vs $2$ to $0$ vs $c \sqrt{n}$. 
For Weaver’s problem, the NP-hardness of 2-2 Set Splitting immediately implies that it is NP-hard to distinguish rank-3 matrices $A_1, \ldots, A_n$ of norm at most $1/4$ that satisfy  $\sum_{i=1}^n A_i = I$ for which there exists an $\xx \in \setof{\pm 1}^n$ satisfying $\sum_{i=1}^n \xx(i) A_i = {\bf 0}$ from those for which every $\xx \in \setof{\pm 1}^n$ must have $\norm{\sum_{i=1}^n \xx(i) A_i} \ge 1/2$. 
The first challenge in our work is that of turning these rank-3 matrices into rank-1 matrices that satisfy the conditions of Weaver's problem.
Our analog of amplification appears when we produce vectors of smaller norm.
Another difference between our results is that we do not know a polynomial time algorithm that approximately solves Weaver's problem, while Bansal~\cite{bansal10} showed that Spencer's problem could be approximately solved in polynomial time.

\section{Notation}

We write $\ee_{i}$ for the elementary unit vector with a $1$ in coordinate $i$,
  and we let $\bvec{1}$ denote the vector will all entries $1$.

As mentioned earlier, we write $\vv^{*}$ for the conjugate transpose of a vector $\vv$.
As this paper only constructs vectors with Real entries, one can just treat this as the transpose.
We let $\norm{\vv} = \sqrt{\vv^{*} \vv}$ denote the standard Euclidean norm of the vector $\vv$.
Unless otherwise specified, when we write the norm of a matrix we mean the operator norm.
We recall that the operator norm of a matrix is at least as large as the operator norm of every one of its submatrices.
For a symmetric matrix, the operator norm is the largest absolute value of its eigenvalues.

The other norm we consider of a matrix is its Frobenius norm, written $\norm{M}_{F}$, which equals the square root of the sum of the squares of the entries of $M$. 
From the identity $\norm{M}_{F}^{2} = \trace{M M^{*}}$, one can see 
  that the square of the Frobenius norm of $M$ equals the sum of the squares of the singular values of $M$.

\section{2-2 Set Splitting}

The \ttssp was defined and proved NP-complete by 
Guruswami \cite{guruswami04}.
An instance of the problem consists of a list of sets $S_{1}, \ldots, S_{m}$, each of which contains exactly four elements
  of $\setof{1,\ldots, n}$ which we identify with $\pm 1$ valued variables 
  $\xx (1), \ldots, \xx (n)$.
A vector $\xx \in \setof{\pm 1}^{n}$ \textit{satisfies} a set $S_{j}$ if $\sum_{i \in S_{j}} \xx (i) = 0$,
  and an $\xx$ satisfies the instance if it satisfies all the sets.
We say that an instance is $\gamma$-unsatisfiable if for 
  every $\xx$ at least a $\gamma$ fraction of the sets are unsatisfied. 
Guruswami proves that for every $\epsilon > 0$ it is NP-hard to distinguish 
  satisfiable instances from those that are $(1/12 - \epsilon)$-unsatisfiable.
Charikar, Guruswami, and Wirth \cite{CGW05} observe that Guruswami's construction has the property that there is a constant $B$ so that no variable appears in more than $B$ sets.

\begin{theorem}[Guruswami]
For every $\epsilon > 0$ there is a constant $B$ so that 
  for instances of the 2-2 Set Splitting Problem in which every variable appears in at most $B$ sets, it is 
  NP-hard to distinguish 
  satisfiable instances from those  $1/12 - \epsilon$ 
  unsatisfiable.
\end{theorem}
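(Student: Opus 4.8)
The plan is to obtain this statement by assembling the two cited works rather than building a reduction from scratch; essentially all of the content is already in \cite{guruswami04,CGW05}. Guruswami's theorem already gives the NP-hardness of distinguishing satisfiable instances of the 2-2 Set Splitting Problem from $(1/12 - \epsilon)$-unsatisfiable ones, via a PCP/Label-Cover reduction composed with a H\aa stad-style long-code test: the verifier reads a constant number of bits from Boolean tables and accepts iff the four queried bits are balanced, which is exactly the condition that a $4$-element set be split $2$-$2$. Completeness and the $1/12-\epsilon$ soundness are the heart of Guruswami's analysis, which I would invoke verbatim. So the only thing left to justify is the bounded-occurrence guarantee, i.e.\ that one may assume every variable lies in at most $B = B(\epsilon)$ sets.

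For the degree bound I would either (a) quote the observation of Charikar, Guruswami, and Wirth that Guruswami's construction already has this property, or (b), to be self-contained, apply a standard expander-based degree-reduction step on top of the $1/12-\epsilon$-hard instance. Fix a constant $D$; for each variable $x_i$ occurring in $t > D$ sets, introduce fresh copies $x_i^{(1)},\dots,x_i^{(t)}$, route the $\ell$-th occurrence of $x_i$ through $x_i^{(\ell)}$, and add ``equality gadget'' sets forcing the copies to agree, placed along the edges of a constant-degree expander on the $t$ copies so each copy lies in only $O(1)$ new sets. Enforcing $x = y$ (or $x = -y$, whichever the balanced-$4$-set predicate most naturally yields, using a pair of auxiliary variables) costs a constant number of $2$-$2$-splitting sets. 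Edge expansion ensures that any assignment disagreeing on a $\rho$-fraction of copy-equalities violates an $\Omega(\rho)$-fraction of gadget sets, so a satisfying assignment of the new instance projects to one of the old, and an assignment far from satisfying the old instance stays far from satisfying the new one; taking $D$ large makes the added sets a negligible fraction of the whole, so the soundness degrades only to $1/12 - \epsilon'$ for an arbitrarily small $\epsilon'$.

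The main obstacle is purely bookkeeping rather than conceptual: one must check that the equality gadget really can be realized with size-exactly-four sets whose balanced predicate forces consistency (this needs a dummy-variable pairing trick), and then track how the fraction of unsatisfied sets transforms under the expander replacement so that the final parameter is still $1/12$ minus something negligible. Since both ingredients are established in the literature, in the paper itself the natural move is simply to cite \cite{guruswami04} for the gap and \cite{CGW05} for the bounded occurrence and proceed.
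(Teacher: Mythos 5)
Your primary route—citing Guruswami for the $1/12-\epsilon$ gap and Charikar--Guruswami--Wirth for the bounded-occurrence property—is exactly what the paper does, which treats this theorem as a quotation of those two works. Your backup expander-based degree-reduction sketch is in the spirit of what the paper does in Lemma~\ref{lem:three} and the Appendix, except that the paper openly accepts a worse constant $\gamma$ there (which is all it needs downstream); preserving $1/12-\epsilon$ as you claim would require the more careful bookkeeping you allude to, namely splitting each high-degree variable into $O(t/D)$ copies of degree $D$ rather than $t$ copies of degree $1$, so that the gadget overhead is $O(m/D)$ rather than $\Theta(m)$.
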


As the fact that there is a constant upper bound on the number of occurrences of each variable is not explicitly stated in \cite{guruswami04}, we sketch a simple proof with a worse constant in the Appendix.

We define the \ttsspt\ to be the restriction of the \ttssp\ to instances in which every variable appears in at most 3 sets.

\begin{lemma}\label{lem:three}
The \ttsspt\ is NP-hard. 
This remains true if we require that no pair of sets intersect in more than one variable.
Moreover, there is a constant $\gamma > 0$ such that
it is NP hard to distinguish 
satisfiable instances of the \ttsspt\ from those that are
$\gamma$-unsatisfiable.
\end{lemma}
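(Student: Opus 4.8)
The plan is to reduce from the bounded-occurrence instances supplied by Guruswami's theorem, driving the maximum number of occurrences of a variable down from the constant $B$ to $3$ by the usual device: split each variable into one fresh copy per occurrence and wire the copies together with \emph{equality gadgets}.

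First I would describe the reduction. Given an instance with sets $S_1,\dots,S_m$ in which variable $\xx(i)$ occurs $d_i \le B$ times, replace $\xx(i)$ by copies $\xx_i^{(1)},\dots,\xx_i^{(d_i)}$, placing the $k$-th copy into the $k$-th set that used $\xx(i)$. Every new variable now lies in exactly one of these ``original-type'' sets, so the original-type sets become pairwise disjoint. To force the copies of a variable to agree, arrange them along a path $\xx_i^{(1)} - \xx_i^{(2)} - \cdots - \xx_i^{(d_i)}$ and between each consecutive pair insert a constant-size gadget built from $4$-element sets on those two copies together with fresh private variables, designed so that (i) it is satisfiable exactly when the two copies take equal values, (ii) each of the two copies occurs in only one of the gadget's sets while each fresh variable occurs in at most two of them, and (iii) any two of the gadget's sets share at most one variable. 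Given such a gadget, each copy ends up in one original-type set and at most two gadget sets (degree $\le 3$), fresh variables have degree $\le 2$, and a short case analysis shows that every pair of sets — original/original, original/gadget, gadget/gadget (within one chain, or across chains) — meets in at most one variable; so the output is a legal $(3,2\text{-}2)$ instance with the required intersection property.

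The heart of the argument is the equality gadget, which I would build as follows. Read a set-splitting constraint on $\{a,b,c,d\}$ as the linear functional $\xx \mapsto \xx(a)+\xx(b)+\xx(c)+\xx(d)$; then if some $\pm 1$-combination of the functionals of a family of constraints equals $\ee_u - \ee_v$, any assignment satisfying all of them must have $\xx(u)=\xx(v)$. Take $K_{4,4}$ with one edge deleted; turn each of its $15$ edges into a fresh variable and each of its $8$ vertices into the set of its incident edges, adjoining $u$ to the set of one degree-$3$ vertex and $v$ to the set of the other degree-$3$ vertex. Summing the sets on one side of the bipartition and subtracting those on the other side, every edge-variable cancels and what remains is $\ee_u - \ee_v$, which gives the ``forces equality'' half of (i); the converse half — satisfiability whenever $\xx(u)=\xx(v)$ — is a finite check; (ii) is immediate from the degree sequence; and (iii) holds because $K_{4,4}$ minus an edge is a simple graph, so two of its vertices share at most one incident edge.

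Finally, for the inapproximability I would round. Suppose the original instance is $(1/12-\epsilon)$-unsatisfiable and let $\yy$ be any assignment of the new instance violating $K$ constraints; call variable $i$ \emph{bad} if the copies of $\xx(i)$ are not all equal under $\yy$. A bad variable must violate at least one constraint inside its own chain of gadgets, and different chains are disjoint, so there are at most $K$ bad variables. Round $\yy$ to an assignment $\xx$ of the original instance by taking the common copy value on good variables and, say, $+1$ on bad ones; then $\xx$ violates at least $(1/12-\epsilon)m$ original sets, and each bad variable can be responsible for the discrepancy between $\xx$ and $\yy$ on at most $B$ of the corresponding original-type sets, so $K \ge (1/12-\epsilon)m - KB$, i.e.\ $K \ge (1/12-\epsilon)m/(B+1)$. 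Since the new instance has only $O(m)$ sets, this is a fixed positive fraction, and since satisfiable instances map to satisfiable instances, the claimed $\gamma$-hardness — and hence ordinary NP-hardness — follows. I expect the design and verification of the equality gadget, in particular confirming that it is genuinely satisfiable whenever its two endpoints agree while forcing them to agree otherwise, to be the main obstacle; the degree bookkeeping and the rounding calculation are routine.
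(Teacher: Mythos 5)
Your proof is correct and follows the paper's high-level strategy: split each variable into per-occurrence copies, chain the copies together with constant-size equality gadgets so that each copy lands in at most one substituted set and two gadget sets, and transfer the unsatisfiability fraction by a rounding argument that charges each inconsistent variable to a violated gadget set and loses only an $O(B)$ factor. The only genuine difference is the equality gadget itself. You build it from $K_{4,4}$ minus an edge: the $15$ surviving edges become fresh variables, the $8$ vertex-stars become $4$-sets (with $u$ and $v$ appended at the two degree-$3$ vertices), and telescoping one side of the bipartition against the other shows that satisfaction forces $\xx(u) = \xx(v)$. The paper's gadget \eqref{eqn:eq_gadget} plays the same game on $K_{4,3}$, padding its four degree-$3$ vertices with $a, b, c, c$ so that the telescoped residue is $a + b + 2c$, which vanishes over $\{\pm 1\}$ exactly when $a = b$; it is marginally leaner (seven sets and fifteen variables versus your eight and seventeen) but obeys the same degree and pairwise-intersection constraints, so everything downstream --- the degree bookkeeping, the intersection case analysis, and the $K \ge (1/12-\epsilon)m/(B+1)$ rounding bound --- matches the paper's argument up to constants. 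One loose end you should close rather than wave at: actually exhibit a satisfying assignment of the gadget when $\xx(u) = \xx(v)$. For instance, with $u = v = 1$, set $(a_4,b_1) = (a_1,b_4) = +1$ and the other two edges at each of the degree-$3$ vertices $a_4, b_4$ to $-1$; then fill the remaining $3\times 3$ block on $\{a_1,a_2,a_3\}\times\{b_1,b_2,b_3\}$ with $+1$ everywhere except $-1$ at $(a_1,b_2)$, $(a_1,b_3)$, $(a_2,b_1)$, $(a_3,b_1)$, and every one of the eight sets sums to zero.
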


The key to proving this lemma is the introduction of an \textit{equality gadget} that forces variables to have the same value. To force variables $a$ and $b$ to have the same value, we introduce variables
  $c$ and $y_{1}, \ldots, y_{12}$ and the seven sets 
\begin{align}
& \setof{a, y_{1}, y_{2}, y_{3}}, 
\setof{b, y_{4}, y_{5}, y_{6}}, 
\setof{c, y_{7}, y_{8}, y_{9}}, 
\setof{c, y_{10}, y_{11}, y_{12}},   
\label{eqn:eq_gadget}
\\
&
\setof{y_{1}, y_{4}, y_{7}, y_{10}}, 
\setof{y_{2}, y_{5}, y_{8}, y_{11}}, 
\setof{y_{3}, y_{6}, y_{9}, y_{12}}.  \notag  
\end{align}

\begin{figure}
\begin{center}
    \includegraphics[height=1.5in]{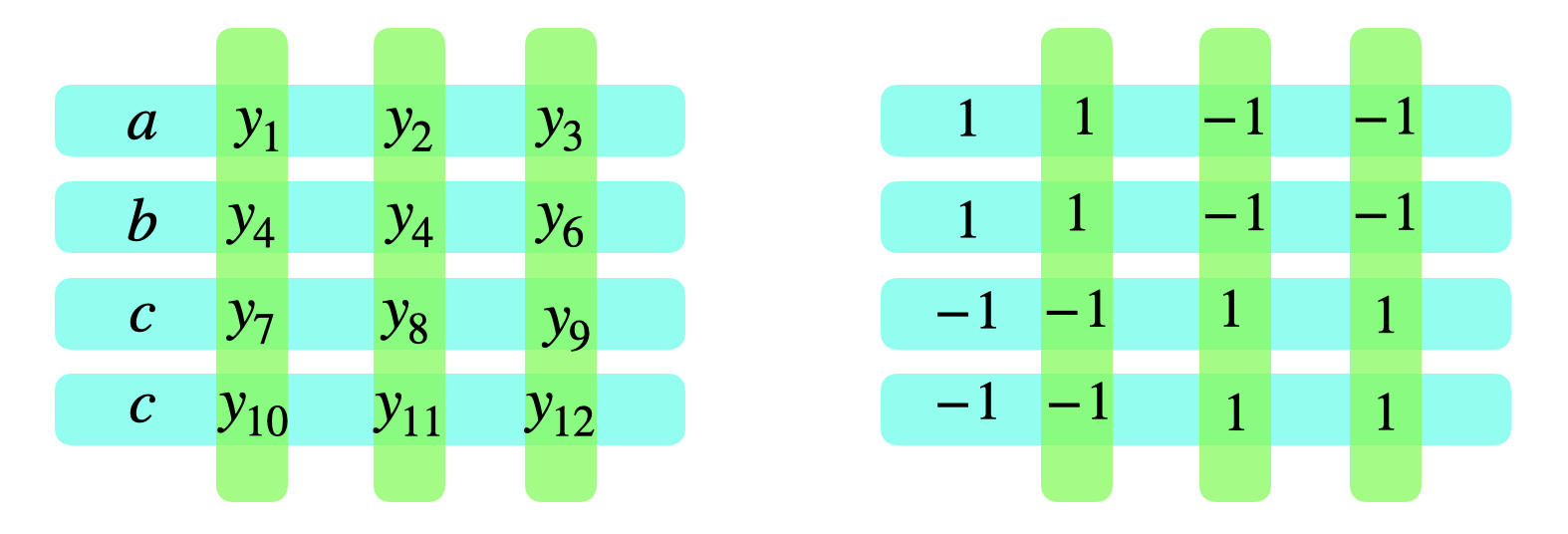}
    \end{center}
    \caption{A depiction of the equality gadget, and a setting of the variables that satisfies all the clauses when $a = b = 1$.}
    \label{fig:eq_gadget}
  \end{figure}
\begin{lemma}\label{lem:eq_gadget}
No variable appears in more than 2 of the 7 sets listed in \eqref{eqn:eq_gadget}, and variables $a$ and $b$ each appear once.
No pair of these sets intersects in more than 1 variable.
If all 7 of the sets are satisfied, then $a = b$.
And, if $a = b$ then there is a setting of the remaining variables that satisfies all the sets.
\end{lemma}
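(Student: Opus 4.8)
The plan is to verify each of the four assertions essentially by inspecting the seven sets, organized around the grid-like incidence structure they define. It is convenient to read the first four sets as \emph{rows}
$R_1 = \setof{a,y_1,y_2,y_3}$, $R_2 = \setof{b,y_4,y_5,y_6}$, $R_3 = \setof{c,y_7,y_8,y_9}$, $R_4 = \setof{c,y_{10},y_{11},y_{12}}$,
and the last three as \emph{columns}
$C_1 = \setof{y_1,y_4,y_7,y_{10}}$, $C_2 = \setof{y_2,y_5,y_8,y_{11}}$, $C_3 = \setof{y_3,y_6,y_9,y_{12}}$:
the twelve $y$-variables form a $4 \times 3$ array whose entry in position $(i,j)$ is the unique variable of $R_i \cap C_j$, while $a$ and $b$ are the extra entries of $R_1$ and $R_2$ and $c$ is the extra entry shared by $R_3$ and $R_4$. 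For the first assertion I would just tabulate occurrences: each $y_i$ lies in exactly one row set and exactly one column set, hence in exactly $2$ sets; $c$ lies in $R_3$ and $R_4$ only; and $a$ (resp.\ $b$) lies only in $R_1$ (resp.\ $R_2$). For the second assertion, the row sets are pairwise disjoint except that $R_3 \cap R_4 = \setof{c}$, the column sets are pairwise disjoint, and $R_i \cap C_j$ is a single array entry, so no two of the seven sets share more than one variable.

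The third assertion is the only one requiring an argument, and it is short. Suppose every one of the seven sets is satisfied, i.e.\ the $\pm 1$ values on it sum to zero. Adding the three column equations counts every $y_i$ exactly once, so $\sum_{i=1}^{12} y_i = 0$. Adding the four row equations also counts every $y_i$ exactly once and in addition contributes $a + b + 2c$, so $a + b + 2c + \sum_{i} y_i = 0$, whence $a + b + 2c = 0$. Since $a,b,c \in \setof{\pm 1}$, this forces $a + b = -2c = \pm 2$, so $a = b$ (and in fact $a = b = -c$).

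For the fourth assertion, by negating all variables I may assume $a = b = 1$; then set $c = -1$ and fill the array column by column as $(y_1,y_4,y_7,y_{10}) = (-1,-1,1,1)$, $(y_2,y_5,y_8,y_{11}) = (-1,-1,1,1)$, $(y_3,y_6,y_9,y_{12}) = (1,1,-1,-1)$. Each column then sums to $0$, while the sums of the $y$-entries in rows $R_1,\dots,R_4$ are $-1,-1,1,1$, which together with $a$, $b$, and $c$ make each row set sum to $0$ as well; this is the assignment shown in Figure~\ref{fig:eq_gadget}. There is essentially no obstacle in this proof: the only points needing care are the bookkeeping of which variable occurs in which set and the one genuine observation that $a + b + 2c = 0$ over $\pm 1$ variables forces $a = b$, which the summation above makes transparent.
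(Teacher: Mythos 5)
Your proof is correct and follows essentially the same route as the paper's. The bookkeeping for the first two assertions, the explicit satisfying assignment for the converse, and the idea of summing constraints over rows and columns all match. The one cosmetic difference is in the key step: you derive the clean algebraic identity $a + b + 2c = 0$ by summing the row and column equations and subtracting, then use $a,b,c \in \setof{\pm 1}$ to force $a=b$; the paper instead phrases the same computation as a parity count (an even number of $1$s among $a,b,y_1,\ldots,y_{12}$, combined with exactly six $1$s among the $y$'s). These are the same observation, and your algebraic phrasing is arguably the more transparent of the two.
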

\begin{proof}
The Figure \ref{fig:eq_gadget} shows a setting of the variables that satisfies all the sets in the case that $a = b = 1$.
If $a = b = -1$, we need merely reverse all the signs.

To see that $a = b$ when these sets are satisfied, note that the last three sets require  
  half of the variables $y_{1}, \ldots, y_{12}$ to be $1$ and half to be $-1$.
If the first 4 sets are satisfied we can combine this the fact
  with the double-occurrence of $c$ to conclude that there must be an even number of $1$s among 
  $a, b, y_{1}, \ldots, y_{12}$, and so $a$ can be $1$ if and only if $b$ is as well.
\end{proof}

\begin{proof}[Proof of Lemma~\ref{lem:three}]
Let $S_{1}, \ldots, S_{m}$ be any instance of the 2-2 Set Splitting problem on 
  variables $x_{1}, \ldots, x_{n}$.
We replace each variable with many copies of itself, and use equality gadgets to force all those copies to be equal.
More formally, if variable $x_{i}$ appears $k$ times, then we create $k$ new variables
  $x_{i,1}, \ldots, x_{i,k}$, and replace each occurrence of $x_{i}$ with one of these.
Call the resulting sets on the new variables the \textit{substituted sets}.

We then add $k-1$ equality gadgets with distinct extra variables to force $x_{i,j}$ to equal $x_{i,j+1}$ for $1 \leq j < k$.
Each of the variables $x_{i,j}$ appears in at most $3$ sets:
  one substituted set and one set in each of up to two equality gadgets.
The substituted sets are all mutually disjoint, as are the equality gadgets.
The only sets that can intersect are inside equality gadgets, or a substituted set and a set in an equality gadget that both contain a variable $x_{i,j}$.
This would be the only variable in which they intersect.

The derivation of the inapproximability result uses standard techniques, such as those from \cite[Section 2.1]{guruswami04}.
Assume that the input instance is $1/13$-unsatisfiable.
As each equality gadget involves 7 sets and the number of equality gadgets is at most $4 m$, the total number of sets in the new system is at most 
 $29 m$.
For any setting of the variables $x_{i,j}$, let $u_{0}$ be the number of unsatisfied substituted sets and $u_{1}$ be the number of unsatisfied sets in equality gadgets.
Call an index $i$ inconsistent if there exist $j$ and $k$ for which $x_{i,j} \neq x_{i,k}$.
The number of inconsistent indices is at most $u_{1}$.
If all the indices were consistent, we would have $u_{0} \geq m / 13$.
As each original variable appears in at most $B$ sets, 
  the number of unsatisfied substituted sets must be at least 
  $m/13 - B u_{1}$.
Thus, the number of unsatisfied sets is at least 
\[
\max (u_{0}, u_{1}) \geq \frac{m}{13} \frac{1}{B+1},
\]
and the new instance is $\gamma$-unsatisfiable for
$ \gamma \geq  1/(13 \cdot 29 \cdot (B+1)).$
\end{proof}

\section{$\alpha = 1/4$}

\begin{theorem}\label{thm:one_quarter}
  Given a list of 
  $1/4$-Weaver vectors $\calV$, it is NP-hard to distinguish whether $W(\calV) = 0$ or $W(\calV) \geq 1/4$.
\end{theorem}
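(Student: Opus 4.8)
The plan is to reduce from the $(3,2\text{-}2)$ Set Splitting Problem, exploiting Lemma~\ref{lem:three}: it suffices to handle instances with four‑element sets $S_1,\dots,S_m$ in which every variable occurs in at most three sets and no two sets share more than one variable. Given such an instance on variables $x_1,\dots,x_n$, I build a list $\calV$ of $1/4$‑Weaver vectors in $\Reals{m}$, one coordinate $\ee_j$ per set $S_j$, so that $W(\calV)=0$ when the instance is satisfiable and $W(\calV)\ge 1/4$ otherwise. Intuitively a signing of $\calV$ records in coordinate $\ee_j$ the quantity $\tfrac14\sum_{i\in S_j}(\pm1)$, which is $0$ exactly when $S_j$ is satisfied. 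The vectors are grouped into one gadget per variable $x_i$, chosen so that the outer products of the $x_i$‑gadget sum to the diagonal (rank‑$\le 3$) matrix $\tfrac14\sum_{j:\,i\in S_j}\ee_j\ee_j^{*}$, and so that in any low‑norm signing the gadget behaves like a single $\pm1$ variable: if $x_i$ occurs only in $S_a$, the gadget is $\tfrac12\ee_a$; if $x_i$ occurs in $S_a,S_b$, the gadget is the pair $\tfrac1{2\sqrt2}(\ee_a+\ee_b),\ \tfrac1{2\sqrt2}(\ee_a-\ee_b)$, whose signed sum is $\tfrac s4(\ee_a\ee_a^{*}+\ee_b\ee_b^{*})$ when the two signs agree (value $s$) and $\pm\tfrac14(\ee_a\ee_b^{*}+\ee_b\ee_a^{*})$ when they disagree; if $x_i$ occurs in $S_a,S_b,S_c$, the gadget is the four Hadamard‑type vectors $\tfrac14(\ee_a\pm\ee_b\pm\ee_c)$, whose signed sum has diagonal entries $P/16$ and off‑diagonal entries $Q_2/16,Q_3/16,Q_{23}/16$ at positions $(a,b),(a,c),(b,c)$, where $(P,Q_2,Q_3,Q_{23})$ is the $4\times4$ Hadamard transform of the four‑sign vector. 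Every gadget vector has squared norm at most $1/4$, and since $|S_j|=4$ the gadget matrices sum to $\tfrac14\sum_j 4\,\ee_j\ee_j^{*}=I$, so $\calV$ is $1/4$‑Weaver, and the construction is polynomial.

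The forward direction is immediate: a satisfying assignment $\xx$ induces the signing $\yy$ in which every sign of the $x_i$‑gadget equals $\xx(i)$; the resulting signed sum of outer products is diagonal with $j$‑th entry $\tfrac14\sum_{i\in S_j}\xx(i)=0$, so $W(\calV)=0$.

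For the converse I assume some signing $\yy$ has $\norm{M(\calV,\yy)}<1/4$ and decode from it an assignment satisfying the instance, a contradiction. The key structural point is that the $(j,k)$ entry of $M(\calV,\yy)$ is produced entirely by the at most one variable lying in $S_j\cap S_k$ (Lemma~\ref{lem:three}), so off‑diagonal "fingerprints" cannot cancel. Hence: (i) every two‑set gadget must agree, since a disagreeing one puts a private $\pm1/4$ in an off‑diagonal entry; (ii) every three‑set gadget is either "clean" (all four signs equal, all $Q$'s zero) or "semi‑clean" (an odd number of $-1$'s, forcing $Q_2,Q_3,Q_{23}\in\{\pm2\}$ and so a private $\pm1/8$ in each of three off‑diagonal entries), because a two‑$(-1)$ pattern makes some $Q$ equal to $\pm4$, again a private $\pm1/4$ entry; (iii) no set contains two semi‑clean variables, for otherwise column $j$ of $M(\calV,\yy)$ holds at least four off‑diagonal entries of absolute value $1/8$, giving $\norm{M(\calV,\yy)}^2\ge\norm{M(\calV,\yy)\ee_j}^2\ge 4/64 = 1/16$. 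Now decode: every non‑semi‑clean gadget has a well‑defined sign, and a semi‑clean gadget is assigned $\mathrm{sign}(P)$; call this assignment $\xx$. If $\xx$ satisfied the instance we would be in the previous case, so some $S_j$ has $|\sum_{i\in S_j}\xx(i)|\ge2$; since $S_j$ holds at most one semi‑clean variable (contributing $\pm1/8$ to $M_{jj}$ in place of $\pm1/4$), the diagonal entry obeys $|M_{jj}|\ge\tfrac24-\tfrac18=\tfrac38>\tfrac14$, the final contradiction.

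The main obstacle, and the reason this is not a one‑line reduction, is the three‑set gadget: unlike the two‑set gadget it admits "partially inconsistent" signings — the semi‑clean ones — that are neither honest nor blatantly bad, and in isolation such a gadget does not force operator norm $1/4$ (its $3\times3$ principal block can have norm as small as $\sqrt3/8<1/4$). Overcoming this is exactly step (iii): semi‑clean states are harmless only one per set, and two of them sharing a set already inflate a column norm past $1/4$. Verifying the Hadamard‑transform identities, the squared‑norm bounds, and the counting in (iii) is routine; pinning down the right decoding together with the column‑norm argument is the crux.
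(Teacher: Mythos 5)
Your proposal is correct, and it takes a genuinely different route from the paper. The paper pads every variable's coordinate set to size exactly $3$ and uses the same three vectors $(1/2)\qq_h$ (projections of $I_3 - 2\ee_h\ee_h^*$) for every gadget; the crux is Lemma~\ref{lem:Q1}, an SDP-duality argument showing that \emph{any} inconsistent three-sign gadget, together with an arbitrary ambient diagonal, already has a $3\times 3$ block of norm $\geq 1/4$. Your construction instead uses occurrence-tailored gadgets of $1$, $2$, or $4$ vectors (with the four Hadamard-type vectors $\tfrac14(\ee_a\pm\ee_b\pm\ee_c)$ for degree-$3$ variables), which is pleasantly explicit and avoids the duality computation. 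The price is that your four-sign gadget admits a ``semi-clean'' inconsistent state whose $3\times 3$ block can indeed have operator norm below $1/4$, so a single bad gadget does not suffice; you compensate with the column-norm argument (step (iii)) that rules out two semi-clean variables in one set, followed by a decoding that accounts for the weakened $\pm 1/8$ diagonal contribution of a lone semi-clean gadget. Both proofs rely on the same preprocessing via Lemma~\ref{lem:three} (the ``sets intersect in at most one variable'' property is what makes off-diagonal fingerprints private in both constructions), and both reach the same $0$ vs.\ $1/4$ gap. One small numerical quibble: the in-isolation lower bound for a semi-clean $3\times 3$ block with a free diagonal is actually $3/16$ (take diagonal $-1/16 \cdot \oone$, giving eigenvalues $\pm 3/16$), not $\sqrt{3}/8$; this does not affect the argument, since either value is below $1/4$, which is all you need to justify step (iii).
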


If we were considering sums of arbitrary matrices rather than sums of outer products, we could prove something like Theorem \ref{thm:one_quarter} by constructing $m$-by-$m$ diagonal matrices $D_{1}, \ldots, D_{n}$ such that 
\[
    D_{i}(j,j) = \begin{cases} 
    1/4 & \text{if $i \in S_{j}$} \\
    0 & \text{otherwise.}
    \end{cases}
\]
The corresponding 2-2 Set Splitting instance is then satisfiable if and only if there exists
 an $\xx \in \setof{\pm 1}^{n}$ so that $\sum_{i} \xx (i) D_{i} = \bvec{0}$.
 In the case where no such sum exists,  some entry of the sum must have absolute value at least $1/2$.
Note that $\sum_{i} D_{i} = I$.
To turn this problem about sums of matrices into an instance of Weaver's problem, we express each $D_{i}$ as a sum of orthogonal vectors.

Define 
\begin{equation}
\qq_{1} =
  \begin{pmatrix}
  -1/3 \\ 2/3 \\ 2/3
  \end{pmatrix},
  \quad
  \qq_{2} =
  \begin{pmatrix}
  2/3 \\ -1/3 \\ 2/3
  \end{pmatrix},  
  \quad \text{and} \quad
  \qq_{3} =
  \begin{pmatrix}
  2/3 \\ 2/3 \\ -1/3
  \end{pmatrix}.    
\end{equation}

Observe that each $q_{i}$ is a unit vector,
and that
\[
\qq_{1} \qq_{1}^{*} + \qq_{2} \qq_{2}^{*} + \qq_{3} \qq_{3}^{*} = I_{3}.
\]
We will use the following special property of these vectors.
\begin{lemma}\label{lem:Q1}
For every $\zz \in \setof{\pm 1}^{3}$ whose entries are not all equal 
  and for every diagonal matrix $X$,
  \[
  \norm{X + \sum_{i} \zz (i) \qq_{i} \qq_{i}^{*}} \geq 1.
  \]
\end{lemma}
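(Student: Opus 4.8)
The plan is to argue by contradiction after a symmetry reduction. First I would normalize $\zz$: permuting the three coordinates simultaneously permutes $\qq_{1},\qq_{2},\qq_{3}$ among themselves and conjugates $X$ to another diagonal matrix while preserving the operator norm, and replacing $(X,\zz)$ by $(-X,-\zz)$ negates the whole matrix without changing its norm. Since the entries of $\zz$ are not all equal, after these operations we may assume $\zz=(1,1,-1)$. For this $\zz$ the identity $\qq_{1}\qq_{1}^{*}+\qq_{2}\qq_{2}^{*}+\qq_{3}\qq_{3}^{*}=I_{3}$ gives $\sum_{i}\zz(i)\qq_{i}\qq_{i}^{*}=I_{3}-2\qq_{3}\qq_{3}^{*}$, so it suffices to show $\norm{X+I_{3}-2\qq_{3}\qq_{3}^{*}}\geq1$ for every diagonal $X=\mathrm{diag}(a,b,c)$.

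Write $M=X+I_{3}-2\qq_{3}\qq_{3}^{*}$ and suppose $\norm{M}<1$. The only fact I would use is that $\abs{\vv^{*}M\vv}<\norm{\vv}^{2}$ for every nonzero vector $\vv$, which follows from Cauchy--Schwarz via $\abs{\vv^{*}M\vv}\leq\norm{\vv}\,\norm{M\vv}\leq\norm{M}\,\norm{\vv}^{2}$. I would test this on three vectors. The vectors $\vv=\ee_{1}-\ee_{2}$ and $\vv=\ee_{1}+\ee_{2}+4\ee_{3}$ are both orthogonal to $\qq_{3}$ (since $\frac{2}{3}-\frac{2}{3}=0$ and $\frac{2}{3}+\frac{2}{3}-\frac{4}{3}=0$), so the $\qq_{3}\qq_{3}^{*}$ term contributes nothing and $\vv^{*}M\vv=\vv^{*}X\vv+\norm{\vv}^{2}$; the strict bound then forces $\vv^{*}X\vv<0$, which reads $a+b<0$ for the first vector and $a+b+16c<0$, hence $c<-\frac{a+b}{16}$, for the second. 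The third test vector is $\qq_{3}$ itself, a unit vector, for which $\qq_{3}^{*}M\qq_{3}=\qq_{3}^{*}X\qq_{3}+1-2=\qq_{3}^{*}X\qq_{3}-1>-1$, so $\qq_{3}^{*}X\qq_{3}=\frac{4a+4b+c}{9}>0$, i.e., $c>-4(a+b)$. Combining the three inequalities yields $-4(a+b)<c<-\frac{a+b}{16}$; but $a+b<0$ makes $-(a+b)>0$, so $-4(a+b)>-\frac{a+b}{16}$, a contradiction. Hence $\norm{M}\geq1$.

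I do not expect a genuine obstacle here: the argument is short once the reduction to $\zz=(1,1,-1)$ is made. The only delicate point is the choice of the second orthogonal test vector --- among the plane of vectors orthogonal to $\qq_{3}$ one must pick one whose squared coordinates involve the third coordinate, so that it constrains $c$ from above while the $\qq_{3}$-direction test constrains $c$ from below, forcing the clash; $\ee_{1}+\ee_{2}+4\ee_{3}$ does exactly this. I would also double-check the bookkeeping in the symmetry reduction, namely that a coordinate transposition $\sigma$ sends $\qq_{i}$ to $\qq_{\sigma(i)}$ and carries diagonal matrices to diagonal matrices, but this is routine.
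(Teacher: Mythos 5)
Your proof is correct. You make the same initial symmetry reduction as the paper (reducing to one choice of $\zz$ via coordinate permutations and negation, yielding a matrix of the form $X + I - 2\qq_j\qq_j^*$), but the two arguments then diverge. The paper exhibits a dual certificate for the operator-norm lower bound: a fixed traceless, zero-diagonal matrix $Y$ with $\trace{R_1^T Y} = 1$ and with eigenvalues whose absolute values sum to $1$, so that $1 = \langle R_1 + X, Y\rangle \le \norm{Y}_1\,\norm{R_1+X} = \norm{R_1+X}$. You instead argue by contradiction from three Rayleigh-quotient inequalities. The connection is closer than it may look: your three test vectors $\ee_1-\ee_2$, $\ee_1+\ee_2+4\ee_3$, and $\qq_3$ are, up to the coordinate relabeling, exactly the eigenvectors of the paper's $Y$; the paper sums the resulting quadratic-form inequalities with weights $|\lambda_i|$ to get the bound in one line, whereas you treat the three scalar inequalities in $(a,b,c)$ directly and derive an incompatibility. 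Your route is a bit more elementary --- it never introduces $Y$, never invokes nuclear-norm duality, and only needs Cauchy--Schwarz --- at the cost of a small case analysis; the paper's is a single-certificate computation once $Y$ is produced, but requires finding $Y$ and its spectrum. Both are clean and both work; the only nitpick is that you should phrase the reduction to $\zz=(1,1,-1)$ carefully so it also covers $\zz$ with two $-1$'s (handled by the global negation you mention), which you do.
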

\begin{proof}
If one of the entries of $\zz$ differs from the other two,
then the matrix $\sum_{i} \zz (i) \qq_{i} \qq_{i}^{*}$ is equal to plus or minus
 a permutation of the matrix 
 \[ R_{1} \defeq 
   I - 2 \qq_{1} \qq_{1}^{*} = 
   \frac{1}{9}
   \begin{pmatrix}
       7 & 4 & 4\\
       4 & 1 & -8\\
       4 & -8 & 1
   \end{pmatrix}.    
 \]
We now show that for every diagonal matrix $X$,
  the operator norm of $R_{1} + X$ is at least $1$.
We do this by giving a solution to the dual to the problem of choosing $X$ to minimize the operator norm.
Let
\[
  Y \defeq 
  \frac{1}{16}
  \begin{pmatrix}
    0 & 2 & 2\\
    2 & 0 & -7\\
    2 & -7 & 0
\end{pmatrix}.      
\]
The matrix $Y$ has inner product $1$ with every matrix of the form $R_{1} + X$.
The eigenvalues of $Y$ are $\lambda_{1} = -1/2$, $\lambda_{2} = 1/16$, and $\lambda_{3} = 7/16$. But, what really matters is that their absolute values sum to $1$.
Let corresponding unit-norm eigenvectors be $\pphi_{1}, \pphi_{2}, \pphi_{3}$,
and recall that $Y = \sum_{i} \lambda_{i} \pphi_{i} \pphi_{i}^{*}$
Then for every diagonal $X$,
\[
1 = \trace{(R_{1}+X)^T Y}
= 
\sum_{i} \lambda_{i} \pphi_{i}^{*} (R_{1}+X) \pphi_{i}
\leq 
\sum_{i} \abs{\lambda_{i}} \norm{R_{1}+X} 
=  \norm{R_{1}+X}.  \qedhere
\]
\end{proof}

\begin{proof}[Proof of Theorem~\ref{thm:one_quarter}]
Let $S_{1}, \ldots, S_{m}$ be an instance of the \ttsspt\ on variables 
  $x_{1}, \ldots, x_{n}$ such that no two sets intersect in more than one variable.
Lemma~\ref{lem:three} tells us that deciding whether the instance is satisfiable is NP-hard.

For each $i$ let $A_{i}$ be the sets in which variable $x_{i}$ appears.
If the variable $x_{i}$ appears in only one or two sets, introduce two or one new coordinates for that variable, and call the set of them $B_{i}$.
Otherwise, let $B_{i}$ be empty.
Let $T_{i} = A_{i} \union B_{i}$.
Define three vectors $\qq_{i,h}$ to be zero everywhere but on coordinates in $T_{i}$, on which
  they equal $(1/2) \qq_{h}$.
Let $D_{i}$ be the diagonal matrix that is $1/4$ on rows and columns indexed by $T_{i}$ and $0$ elsewhere, so that
\begin{equation}\label{eqn:1q}
    \qq_{i,1} \qq_{i,1}^{*} + \qq_{i,2} \qq_{i,2}^{*} + \qq_{i,3} \qq_{i,3}^{*} = D_{i}.     
\end{equation}
For the $i$ for which $B_{i}$ is non-empty, we introduce vectors $\rr_{j,h} = (1/2) \ee_{j,h}$ for $j \in B_{i}$ and $1 \leq h \leq 3$.
Let $B = \union_{i} B_{i}$, and let $A = \setof{1, \ldots, m}$.

We now consider Weaver's problem on the list of vectors $\calV$ consisting of  $\setof{\qq_{i,h}}$
  and $\setof{\rr_{i,h}}$.
To see that this collection of vectors is $1/4$-Weaver, first observe that each vector of form $\qq_{i,h}$ or $\rr_{i,h}$ has norm $1/2$.
Let the sum of their outer products be
\[
  M(\calV, \bvec{1}) = \sum_{1 \leq i \leq n, 1 \leq h \leq 3} \qq_{i,h} \qq_{i,h}^{*}
  +  \sum_{j \in B, 1 \leq h \leq 3} \rr_{j,h} \rr_{j,h}^{*}   .
\]
To see that $M(\calV, \bvec{1})$ is the identity, first observe that all of its off-diagonal entries are zero.
For $j \in A$, the $(j,j)$ entry is the sum of $1/4$ for every variable in set $S_{j}$, and is thus $1$.
For $j \in B$, the $(j,j)$ entry receives a contribution of $1/4$ from the sum 
  $\sum_{1 \leq h \leq 3} \qq_{i,h} \qq_{i,h}^{*}$ for the $i$ such that 
$j \in B_{i}$, and another $1/4$ from each $\rr_{j,h} \rr_{j,h}^{*}$.

Let $\zz (i,h)$ be the sign for the outer product $\qq_{i,h} \qq_{i,h}^{*}$
  and let $\ww (j,h)$ be the sign for the outer product
  $\rr_{j,h} \rr_{j,h}^{*}$, and extend the definition of $M$ so that we can write the signed sum of outer products as $M(\calV, \zz, \ww)$.

If the (3,2-2) Set Splitting instance is satisfied by $\xx$, then set 
  $\zz (i,h) = \xx (i)$ for each $i$, and for each $j$ in a non-empty $B_{i}$, 
   set $\ww (j,1) = \xx (i)$ and $\ww (j,2) = \ww (j,3) = -\xx(i)$.
This causes the signed sum of the outer products of the vectors to be the zero matrix.

If the (3,2-2) Set Splitting instance  is unsatisfiable, we will show that for every $\zz$ and $\ww$, the norm of $M(\calV, \zz, \ww)$ is 
   at least $1/4$.
We break our analysis into two cases.
In the first, we examine what happens if there exists a vector $\xx$
  so that for all $i$ and $h$,  $\zz (i,h) = \xx (i)$.
In this case, 
\[
    \sum_{1 \leq i \leq n, 1 \leq h \leq 3} \zz (i,h) \qq_{i,h} \qq_{i,h}^{*}   
    = 
    \sum_{1 \leq i \leq n} \xx (i) D_{i}.
\]
As the (3,2-2) Set Splitting instance is not satisfied by $\xx$, there must be some set $j$ for which the absolute value of sum of $\xx(i)$ for $i \in S_{j}$ is at least 2, and thus the $(j,j)$ entry of $M(\calV, \zz, \ww)$ has absolute value at least $1/2$.
As the operator norm of a matrix is at least the absolute value of its largest diagonal, in this case the norm of the signed sum must be at least $1/2$.

In the other case there is some $i$ for which
  not all of the $\zz (i, h)$ are equal.
Now, consider the entries of $M(\calV, \zz, \ww)$ that appear in rows and columns indexed by 
  $T_{i}$.
As each pair of sets $T_{i}$ and $T_{k}$ can only intersect in one element
  for $i \neq k$, the off-diagonal entries of this submatrix are equal to the 
  off-diagonals of $\sum_{1 \leq h \leq 3} \zz (i, h) \qq_{h} \qq_{h}^{*}$.
Regardless of the diagonals, Lemma~\ref{lem:Q1} tells us that this submatrix 
  has operator norm at least $1/4$, and thus $M(\calV, \zz, \ww)$ does as well.
\end{proof}

\section{General $\alpha$}

\begin{theorem}\label{thm:general}
  There exists a constant $\kappa > 0$ so that for 
  every integer $k \geq 2$
  it is
    NP-hard to distinguish a list of $1/2k$-Weaver vectors $\calW$ for which $W(\calW) = 0$
    from those for which $W(\calW) \geq \kappa / \sqrt{k}$.
  \end{theorem}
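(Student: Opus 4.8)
The plan is to reduce from the \ttsspt\ (Lemma~\ref{lem:three}) by generalizing the reduction behind Theorem~\ref{thm:one_quarter}. Recall that that reduction turns a variable $x_i$, appearing in the sets collected in $T_i$, into the three norm-$1/2$ vectors $\qq_{i,h} = \tfrac12\qq_h$ supported on $T_i$, and rests on two facts: $\sum_h \qq_{i,h}\qq_{i,h}^*$ is the diagonal matrix $D_i$, and Lemma~\ref{lem:Q1}, which says that for every non-constant signing $\zz$ of the triple and every diagonal matrix $X$, the operator norm of $\sum_h \zz(h)\qq_{i,h}\qq_{i,h}^* + X$ is at least $1/4$. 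To reach $1/2k$-Weaver vectors I would replace each local triple by a local \emph{super-gadget} $\calG_k$: a list of vectors, of polynomial size, each of squared norm at most $1/2k$, supported on a set $T_i$ of at most three ``shared'' coordinates (one per set containing $x_i$, padded to exactly three) together with ``private'' coordinates used only by $x_i$, such that (i) $\sum_{\vv \in \calG_k}\vv\vv^*$ is a diagonal matrix equal to $1/4$ on the shared coordinates, and (ii) for every \emph{non-constant} signing of $\calG_k$ and every diagonal matrix $X$, the signed sum plus $X$ has operator norm at least $\kappa/\sqrt k$.

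Granting such a $\calG_k$, the reduction and its analysis follow the proof of Theorem~\ref{thm:one_quarter} almost verbatim. Install a copy of $\calG_k$ for each variable, and on each private coordinate add filler vectors of squared norm $1/2k$ (scaled copies of an $\ee_j$, as with the $\rr_{j,h}$ in Theorem~\ref{thm:one_quarter}) so that the outer products of the whole list $\calW$ sum to the identity; by property~(i), $\calW$ is $1/2k$-Weaver. If the Set Splitting instance is satisfied by $\xx$, sign every vector in the $i$-th copy of $\calG_k$ by $\xx(i)$ and sign the filler vectors so that they cancel in pairs; exactly as before the signed sum is $\bvec{0}$, so $W(\calW) = 0$. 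If the instance is unsatisfiable, take any signing: either some copy of $\calG_k$, say the one for $x_{i_0}$, is signed non-constantly, and then --- since two sets meet in at most one variable and private coordinates belong to a single variable, so no other vector contributes any off-diagonal entry inside the block indexed by $T_{i_0}$ --- property~(ii) applied to that principal submatrix forces operator norm at least $\kappa/\sqrt k$; or every copy is signed by a single value $\xx(i)$, in which case some set $S_j$ is violated, making the $(j,j)$ entry of the signed sum at least $\tfrac14\cdot 2 = \tfrac12 \ge \kappa/\sqrt k$ in absolute value (choosing $\kappa \le \sqrt2/2$). Hence $W(\calW)\ge\kappa/\sqrt k$ in the unsatisfiable case, and since the construction uses $\mathrm{poly}(k)$ vectors per variable it is produced in polynomial time.

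The heart of the argument, and the step I expect to be the main obstacle, is constructing $\calG_k$. This is delicate precisely because a list of small-norm vectors summing to a fixed diagonal is exactly the setting of Weaver's conjecture, so by Marcus--Spielman--Srivastava it already admits a signing of norm $O(1/\sqrt k)$; we need a list for which that bound is tight up to a constant, and tight even after an arbitrary diagonal is added and for \emph{every} non-constant signing. Naive ideas fail for instructive reasons: simply scaling the triple gadget down to squared norm $1/2k$ makes property~(ii) yield only $\Theta(1/k)$, which is far below $\Theta(1/\sqrt k)$; and taking many near-parallel scaled copies of each $\qq_h$ lets the adversary cancel them with a non-constant but locally balanced signing, destroying property~(ii). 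So $\calG_k$ must be genuinely spread out, and I would build it from an explicit combinatorial object --- for instance the suitably scaled columns of a $\pm1$ matrix with pairwise orthogonal rows, such as a submatrix of a Hadamard matrix --- together with explicit dual certificates: for each non-constant signing a symmetric matrix $Y$ with zero diagonal and $\sum_i \abs{\lambda_i(Y)} \le 1$ whose trace inner product with the signed sum is at least $\kappa/\sqrt k$, playing the role that $Y$ plays in the proof of Lemma~\ref{lem:Q1}. Exhibiting such an object for all $k$ and verifying the $\kappa/\sqrt k$ bound uniformly over non-constant signings is where the real work lies; controlling the number of vectors so that it stays polynomial (and, if one wants the stronger ``for every $\alpha$'' statement, so that $\alpha = 1/2k \ge \Omega(|\calW|^{-1/2})$) is a routine bookkeeping matter by comparison.
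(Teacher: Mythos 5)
Your two-stage plan (build a local gadget, then run the $\alpha = 1/4$ proof verbatim) is sound as a plan, but the gadget $\calG_k$ you defer to ``the real work'' cannot exist once $k > 1/\kappa^2$, so the gap is not effort but impossibility. Property~(ii) must hold for \emph{every} non-constant signing, in particular the one that flips a single vector $\vv_1 \in \calG_k$. By~(i) the all-$+1$ signed sum is a diagonal $D_0$, so the one-flip sum is $D_0 - 2\vv_1\vv_1^*$; taking $X = -D_0 + 2\,\mathrm{diag}(\vv_1\vv_1^*)$ leaves $-2\bigl(\vv_1\vv_1^* - \mathrm{diag}(\vv_1\vv_1^*)\bigr)$, a difference of two PSD matrices each of operator norm at most $\norm{\vv_1}^2 \le 1/(2k)$, hence of operator norm at most $1/k$. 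So the worst non-constant signing achieves $O(1/k)$, not $\Omega(1/\sqrt k)$. The ``$\Theta(1/k)$'' you noticed for the scaled triple gadget is not a defect of that one construction; it is the generic single-flip obstruction, and it rules out \emph{every} local gadget at scale $\alpha = 1/2k$, Hadamard structure and dual certificates included.

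The paper escapes this by decoupling the two jobs your gadget was asked to do at once. Its first reduction (Lemmas~\ref{lem:Q4} and~\ref{lem:reduction1}) stays at scale $\alpha = 1/4$, using four orthogonal unit vectors $\qq_1, \ldots, \qq_4$ scaled by $1/2$, and shows that for a $\gamma$-unsatisfiable instance \emph{every} signing leaves a constant fraction ($\gamma/12$) of the diagonals of $M(\calV, \cdot)$ with magnitude at least $1/50$: many non-constantly signed variable blocks are caught on private diagonals, while if few blocks are non-constant then unsatisfied sets are caught on shared diagonals. The shrink to $1/2k$-Weaver vectors is then a separate, \emph{global} step (Lemma~\ref{lem:reduction2}): multiply every vector by a block-diagonal $F$ whose blocks are copies of $G = \Pi B/\sqrt k$, the projected signed edge-vertex matrix of $K_k$ (Lemma~\ref{lem:Gmatrix}). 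Because $G$ has columns of norm $\sqrt{2/k}$ and $GG^* = I$, the new list is $1/2k$-Weaver, and because $\norm{G D G^*} \ge \tfrac{1}{k}\sqrt{\tfrac{2}{k-1}} \norm{D}_F$, a constant fraction of $\binom{k}{2}$ constant-sized diagonals pools into $\norm{D}_F = \Theta(k)$, hence operator norm $\Theta(1/\sqrt k)$. The discrepancy is an aggregate over $\Theta(k^2)$ coordinates extracted by a Frobenius-norm argument; it is not, and by the single-flip bound above cannot be, produced one variable gadget at a time. That global Frobenius amplification is the idea missing from your proposal.
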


The proof employs two reductions, the first of which is a variation of the one used in the previous section.
When this reduction is applied to a $\gamma$-unsatisfiable (3,2-2) Set Splitting instance, it produces a set of vectors $\calV$ so that for all $\xx$, a constant fraction of the diagonals of $M(\calV, \xx)$ have absolute value at least $1/50$.
The second reduction converts these into instances of $1/2k$-Weaver vectors such that 
  every signed sum of those vectors has operator norm at least $\kappa / \sqrt{k}$.

In the first reduction, we use the following four orthogonal vectors:
\[
\qq_{1} \defeq \frac{1}{5}
\begin{pmatrix}
    1 \\ 4 \\ -2 \\ -2
\end{pmatrix}    
\quad
\qq_{2} \defeq \frac{1}{5}
\begin{pmatrix}
    4 \\ 1 \\ 2 \\ 2
\end{pmatrix}    
\quad
\qq_{3} \defeq \frac{1}{5}
\begin{pmatrix}
    -2 \\ 2 \\ -1 \\ 4
\end{pmatrix}    
\quad
\qq_{4} \defeq \frac{1}{5}
\begin{pmatrix}
    -2 \\ 2 \\ 4 \\ -1
\end{pmatrix}.    
\]

\begin{lemma}\label{lem:Q4}
For every $\zz \in \setof{\pm 1}^{4}$ that doesn't equal $\pm \bvec{1}$,
  for every $\ww \in \setof{\pm 1}^{3}$, and for every $1 \leq j \leq 4$,
 \[
 \abs{\sum_{i=1}^{4} (1/4) \zz (i) (\qq_{i} (j))^{2} 
 + \sum_{h = 1}^{3} (1/4) \ww (h)  } \geq 1/50.   
 \] 
\end{lemma}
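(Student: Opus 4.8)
\emph{Approach.} The quantity to be bounded is a single scalar --- one diagonal entry of a signed sum --- rather than an operator norm, so I do not expect to need the SDP-duality style of argument used for Lemma~\ref{lem:Q1}; instead the plan is a direct arithmetic argument obtained by clearing denominators. Fix $\zz$, $\ww$, and $j$ as in the statement. Every coordinate $\qq_i(j)$ has the form $\tfrac15 t$ with $t \in \{\pm1,\pm2,\pm4\}$, so the numbers $c_{ij} \defeq 25\,(\qq_i(j))^2$ lie in $\{1,4,16\}$ and in particular are positive integers; and since the four vectors are orthonormal, $\sum_{i=1}^4 \qq_i\qq_i^* = I_4$, giving $\sum_{i=1}^4 (\qq_i(j))^2 = 1$ and hence $\sum_{i=1}^4 c_{ij} = 25$ for each $j$. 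Writing $N_j \defeq \sum_{i=1}^4 \zz(i) c_{ij}$ and $W \defeq \sum_{h=1}^3 \ww(h)$, the expression in the lemma equals $\tfrac14\bigl(\tfrac{N_j}{25} + W\bigr) = \tfrac{N_j + 25W}{100}$, so it is enough to prove $\abs{N_j + 25W} \ge 2$.

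\emph{Key steps.} From there I would use two one-line observations. First, $N_j = 25 - 2\sum_{i:\,\zz(i) = -1} c_{ij}$, so $N_j$ is an odd integer; and $\abs{N_j} \le \sum_i c_{ij} = 25$, with equality only when all the $\zz(i)$ share a sign (here the positivity of every $c_{ij}$ is used), i.e.\ only when $\zz = \pm\bvec1$. Since $\zz \ne \pm\bvec1$, oddness then forces $\abs{N_j} \le 23$. Second, $W$ is a sum of three $\pm1$'s, hence an odd integer with $\abs{W} \ge 1$, so $\abs{25W} \ge 25$. The triangle inequality gives $\abs{N_j + 25W} \ge 25 - 23 = 2$, and dividing by $100$ yields the bound $1/50$; this is tight, attained when $\zz$ flips only the coordinate with $c_{ij} = 1$ and $W = -1$, which is presumably why the constant is $1/50$ and not larger.

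\emph{Main obstacle.} I do not expect a genuine obstacle inside the proof: the real content is the choice of the four displayed vectors, so that $25\,(\qq_i(j))^2$ is a positive integer with column sums $25$ --- once that holds, the parity-plus-magnitude step is automatic. The only thing I would be careful about is the bookkeeping: verifying orthonormality of $\qq_1,\dots,\qq_4$ and checking that each $25\,(\qq_i(j))^2 \in \{1,4,16\}$. I would also confirm that the surrounding reduction really only needs this per-coordinate estimate (a lower bound on one diagonal entry), rather than a bound on a whole $4\times4$ submatrix as in Lemma~\ref{lem:Q1}; the form of the statement strongly suggests it does.
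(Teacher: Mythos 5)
Your proof is correct and takes essentially the same approach as the paper's: both clear denominators, note that the signed sum of the squared coordinates $25(\qq_i(j))^2 \in \{1,4,4,16\}$ is an odd integer whose absolute value drops to at most $23$ once $\zz \ne \pm\bvec{1}$, and conclude by the triangle inequality against the $\ww$-contribution, which lies in $\{\pm 25,\pm 75\}$ after scaling by $100$. The only cosmetic difference is that you derive $\sum_i 25(\qq_i(j))^2 = 25$ from orthonormality of the $\qq_i$'s, while the paper simply lists the multiset; and you correctly observe that only the upper bound $\abs{N_j}\le 23$ is needed, whereas the paper also states (but does not use) the lower bound $\abs{N_j}\ge 1$.
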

\begin{proof}
The multiset of values of $(\qq_{i}(j))^{2}$ as $i$ varies from $1$ through $4$
  is $(1/25, 4/25, 4/25, 16/25)$.
Thus, every non-constant signed sum of these numbers must be an odd multiple of $1/25$ with absolute value less than $1$ and
 every non-constant signed sum of $1/4$ times these numbers 
  must have absolute value between $1/100$ and $23/100$. 
As the term $\sum_{h = 1}^{3} (1/4) \ww (h)$ can only take values in 
  $\setof{\pm 25/100, \pm 75/100}$, 
the total sum must have absolute value at least $2/100 = 1/50$.
\end{proof}

We model our first reduction on the one from the previous section, but using these vectors.
Let $S_{1}, \ldots, S_{m}$ be an instance of the (3,2-2) Set Splitting Problem on variables 
  $x_{1}, \ldots, x_{n}$.
For each $i$ let $A_{i}$ be the indices of the sets in which variable $x_{i}$ appears.
If variable $x_{i}$ appears in $k$ sets, introduce $4-k$ new coordinates for that variable, and call the set of them $B_{i}$.
As $k \leq 3$, $B_{i}$ will not be empty.
Let $T_{i} = A_{i} \union B_{i}$.
Define four vectors $\qq_{i,h}$ that are zero everywhere except on coordinates in $T_{i}$, on which
  they equal $(1/2) \qq_{h}$.
For each variable and each $j \in B_{i}$, we introduce vectors $\rr_{j,h} = (1/2) \ee_{j,h}$ for $j \in B_{i}$ and $1 \leq h \leq 3$.
Let $\calV$ consist of the vectors $\setof{\qq_{i,h}}$ and $\setof{\rr_{j,h}}$.
This collection of vectors is $1/4$-Weaver.
Let $A = \setof{1, \ldots, m}$, $B = \union_{i} B_{i}$, and note that $\sizeof{B} \leq 3m$.

\begin{lemma}\label{lem:diag1}
Let $\zz(i, h)$ be $\setof{\pm 1}$ variables for $1 \leq i \leq n$ and $1 \leq h \leq 4$.
Also let $\ww(j, h)$ be in $\pm 1$ for $j \in B$ and $1 \leq h \leq 3$.
If there are $k$ values of $i$ for which $\zz(i, h)$ is not constant over $h$,
  the matrix $M (\calV, \zz, \vv)$ must have at least 
  $k$  diagonal entries in columns in $B$ with absolute value at least $1/50$.
\end{lemma}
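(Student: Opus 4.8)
The plan is to localize the whole estimate to one private coordinate per ``bad'' variable. Let $I$ be the set of indices $i$ for which $\zz(i,\cdot)$ is not constant over $h$, so $\sizeof{I} = k$. First I would observe that each $B_i$ with $i \in I$ is non-empty (it has $4 - (\text{number of sets containing } x_i) \geq 1$ elements, since in a $(3,2\text{-}2)$ instance every variable lies in at most $3$ sets), so we may pick one coordinate $j_i \in B_i$. Because the sets $B_i$ are pairwise disjoint subsets of $B$, the chosen coordinates $\setof{j_i : i \in I}$ are $k$ distinct elements of $B$, and it suffices to prove $\abs{M(\calV, \zz, \ww)(j_i, j_i)} \geq 1/50$ for each $i \in I$.

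Next I would pin down which vectors of $\calV$ actually touch the coordinate $j_i$. Since $j_i$ is a fresh coordinate introduced only for the variable $x_i$, it lies in no $A_{i'}$ (hence not in $A$) and in no $T_{i'}$ for $i' \neq i$, and it is distinct from every $j' \in B$ with $j' \neq j_i$. Hence the only vectors in $\calV$ with a nonzero entry in coordinate $j_i$ are $\qq_{i,1}, \ldots, \qq_{i,4}$ and $\rr_{j_i,1}, \rr_{j_i,2}, \rr_{j_i,3}$. Writing $p \in \setof{1,2,3,4}$ for the position of $j_i$ inside $T_i$ under the fixed identification of $T_i$ with $\setof{1,2,3,4}$ used to define the $\qq_{i,h}$, one has $\qq_{i,h}(j_i) = (1/2)\qq_h(p)$ and $\rr_{j_i,h}(j_i) = 1/2$, so reading off the $(j_i,j_i)$ entry of the signed sum of outer products gives
\[
M(\calV, \zz, \ww)(j_i, j_i) = \sum_{h=1}^{4} \tfrac14 \zz(i,h)\, (\qq_h(p))^2 + \sum_{h=1}^{3} \tfrac14 \ww(j_i,h).
\]

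Finally I would apply Lemma~\ref{lem:Q4} verbatim to this expression, with its sign vector $\zz$ instantiated as $(\zz(i,1), \ldots, \zz(i,4))$ — which is not $\pm \bvec{1}$ precisely because $i \in I$ — with its $\ww$ instantiated as $(\ww(j_i,1), \ww(j_i,2), \ww(j_i,3))$, and with its coordinate index instantiated as $p$. The lemma yields $\abs{M(\calV, \zz, \ww)(j_i, j_i)} \geq 1/50$, and summing over $i \in I$ produces the required $k$ distinct columns in $B$.

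I do not expect a genuine obstacle here: the statement is essentially a bookkeeping corollary of Lemma~\ref{lem:Q4}. The only point needing care is the ``cleanness'' claim — that no vector other than the $\qq_{i,h}$ and the three $\rr_{j_i,h}$ contributes to the diagonal at $j_i$ — which is exactly the reason the construction attaches private fresh coordinates $B_i$ to each variable rather than reusing existing ones. A second, purely clerical, point is keeping straight the identification between the abstract four-element set $T_i$ and the coordinate labels $1,\dots,4$ of the fixed vectors $\qq_1,\dots,\qq_4$. (Off-diagonal interactions between different $T_i$'s play no role at this stage, since only diagonal entries are examined; those interactions become relevant only in the later step that lower-bounds the operator norm.)
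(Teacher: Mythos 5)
Your argument is correct and follows exactly the paper's route: the paper's (one‑line) proof also isolates, for each $i$ with $\zz(i,\cdot)$ non‑constant, the diagonals indexed by $B_i$ and applies Lemma~\ref{lem:Q4} to them, using the disjointness of the $B_i$ and the fact that $B_i\neq\emptyset$ since every variable appears in at most three sets. Your write‑up just spells out the bookkeeping (the explicit diagonal formula and the identification of which vectors touch coordinate $j_i$) that the paper leaves implicit.
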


\begin{proof}
For every $i$ for which $\zz(i, h)$ is not constant over $h$,
  Lemma~\ref{lem:Q4} tells us that every diagonal indexed by $B_{i}$ must have absolute value
  at least $1/50$.
\end{proof}

\begin{lemma}\label{lem:reduction1}
Let $\calV$ be the vectors produced by this reduction on a 
 \ttsspt\ instance.
 Every vector in $\calV$ has at most $4$ non-zero entries,
 and no coordinate is in the support of more than $7$ of the vectors.
 If the set splitting instance is satisfiable, then $W(\calV) = 0$.
 If the set splitting instance is $\gamma$-unsatisfiable, then 
 for every $\zz$ and $\ww$, at least a $\gamma / 12$ fraction
   of the diagonal entries of $M(\calV, \zz, \ww)$ must have absolute value at least $1/50$.
\end{lemma}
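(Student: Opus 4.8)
The plan is to verify each of the four claimed properties in turn, most of which follow by inspecting the reduction, with the quantitative unsatisfiability bound being the one genuinely requiring an argument.

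\textbf{Structural properties.} Each vector $\qq_{i,h}$ is supported on $T_i = A_i \cup B_i$, which has size exactly $4$ since $|A_i| + |B_i| = 4$ by construction; each $\rr_{j,h}$ has support size $1$. So every vector has at most $4$ non-zero entries. For the support bound on each coordinate: a coordinate $j \in A$ corresponds to a set $S_j$; the vectors $\qq_{i,h}$ with $j$ in their support are exactly those with $i \in S_j$ and $1 \le h \le 4$, contributing $4|S_j| = 16$ --- wait, this overcounts, so I must instead recall the structure: actually for $j \in A$, the $i$ with $j \in T_i$ are the (at most $4$) variables in $S_j$, each giving $4$ vectors, so up to $16$; this exceeds $7$. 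I would therefore re-examine: the bound of $7$ must come from the fact that $T_i$ for $i \in S_j$ need not all contain $j$ --- no, they do. I expect the correct reading is that each set has exactly $4$ elements but the relevant count in the statement refers to something sparser; the safe route is to count carefully, using that for $j \in B_i$ the coordinate $j$ lies in the support of the four $\qq_{i,h}$ plus the three $\rr_{j,h}$, giving $7$, and to argue the $j \in A$ case is dominated by this. I will need to recheck the exact indexing of $h$ (range $3$ vs.\ $4$) before finalizing the constant; the main subtlety here is bookkeeping, not mathematics.

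\textbf{Completeness.} If the set splitting instance is satisfiable by $\xx \in \setof{\pm 1}^n$, mimic the argument in the proof of Theorem~\ref{thm:one_quarter}: set $\zz(i,h) = \xx(i)$ for all $h$, and for each $j \in B_i$ choose $\ww(j,\cdot) \in \setof{\pm 1}^3$ so that the contributions of $\rr_{j,h}\rr_{j,h}^*$ cancel the diagonal contribution $\sum_h \zz(i,h)\qq_{i,h}\qq_{i,h}^*$ restricted to coordinate $j$. Since the four numbers $(\qq_h(j))^2$ sum appropriately and $\xx$ satisfies every $S_j$, all diagonal entries vanish; all off-diagonal entries vanish because distinct $T_i, T_k$ share at most one coordinate. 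Hence $M(\calV, \zz, \ww) = \bvec{0}$ and $W(\calV) = 0$.

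\textbf{Soundness (the main point).} Suppose the instance is $\gamma$-unsatisfiable. Fix any signing $(\zz, \ww)$. Let $k$ be the number of indices $i$ for which $\zz(i,\cdot)$ is not constant over $h$. By Lemma~\ref{lem:diag1}, $M(\calV,\zz,\ww)$ has at least $k$ diagonal entries (in columns of $B$) of absolute value $\ge 1/50$. If $k$ is a constant fraction of $m$ we are done, since the total number of coordinates is $|A| + |B| \le m + 3m = 4m$. Otherwise, on the remaining $n - k$ indices $\zz(i,\cdot)$ is constant, defining $\xx(i) \in \setof{\pm 1}$ there; since the instance is $\gamma$-unsatisfiable, at least $\gamma m$ of the sets $S_j$ are unsatisfied by $\xx$ --- but I must account for the $k$ inconsistent variables corrupting up to $3k$ sets (each variable appears in $\le 3$ sets), so at least $\gamma m - 3k$ sets $S_j$ have $|\sum_{i \in S_j}\xx(i)| \ge 2$ with all their variables consistent. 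For such $j$, the $(j,j)$ entry gets $\sum_{i \in S_j}(1/4)\zz(i,\cdot)$ from the $\qq$-vectors plus whatever comes from $B$-coordinates --- but $j \in A$ so no $\rr$-contribution and no off-diagonal interference, giving $|M(\calV,\zz,\ww)(j,j)| \ge 1/2$. Combining the two cases: the number of large diagonal entries is at least $\min$ over the split, $\max(k,\ \gamma m - 3k)$, which is $\ge \gamma m / 4$ by balancing; dividing by the $\le 4m$ total coordinates gives a fraction at least $\gamma/16$, and I would then loosen constants to reach the stated $\gamma/12$ or adjust the claim. The obstacle to watch is precisely this constant-juggling --- making sure the "inconsistent variable" loss $3k$ and the case split combine cleanly to beat $\gamma/12$; I may need the sharper observation that inconsistent indices contribute their own large $B$-diagonals (from Lemma~\ref{lem:diag1}), so no set is truly "wasted," tightening the bound.
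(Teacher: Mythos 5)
Your soundness argument follows the paper's own strategy, but stops short at the key quantitative step. The paper does not take $\max(k,\; \gamma m - 3k)$: it observes that the $k$ large diagonals guaranteed by Lemma~\ref{lem:diag1} lie in columns of $B$, while the $\gamma m - 3k$ large diagonals lie in columns of $A$, and these column sets are disjoint, so the two counts add. This gives at least $\gamma m - 3k + k = \gamma m - 2k \geq (\gamma/3) m$ large diagonals when $k \leq (\gamma/3)m$ (and at least $k \geq (\gamma/3)m$ otherwise), out of at most $4m$ coordinates, i.e.\ a fraction at least $\gamma/12$. This is exactly the ``sharper observation'' you flagged at the end; it is not an optional tightening but the step that reaches $\gamma/12$ rather than $\gamma/16$. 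One further point of rigor: to invoke $\gamma$-unsatisfiability you need a $\pm 1$ assignment on all $n$ variables, so you should first set $\zzhat(i,h) = \zz(i,1)$ for $i \notin K$ and $\zzhat(i,h)=1$ for $i \in K$ (as the paper does), and then bound the number of $A$-diagonals on which $M(\calV,\zz,\ww)$ and $M(\calV,\zzhat,\ww)$ can disagree by $3k$.

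Your worry about the ``$7$'' claim is well-founded. A coordinate $j \in A$ lies in $T_i$ for each of the four $i \in S_j$, so it is in the support of $4 \times 4 = 16$ of the vectors $\qq_{i,h}$; the count of $7$ is correct only for $j \in B$ (four $\qq_{i,h}$ plus three $\rr_{j,h}$). As written, that clause of the lemma appears to be an overstatement, but it is harmless downstream: in the second reduction what is used is a degree bound on the conflict graph, and for $j \in A$ the sixteen vectors come in four groups whose supports $T_i\setminus\{j\}$, $i \in S_j$, are pairwise disjoint (no two sets share two variables), giving degree $12 \leq 21$, so the $22$-coloring still goes through. The rest of your proposal --- the structural count for $B$-coordinates, the completeness direction, and the application of Lemma~\ref{lem:diag1} --- matches the paper.
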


\begin{proof}
If the set splitting instance is satisfiable, let $\xx$ be the vector that satisfies it. 
We then set $\zz (i,h) = \xx (i)$ for each $i$, and for each $j$ in $B_{i}$
   we set $\ww (j,1) = \xx (i)$ and $\ww (j,2) = \ww (j,3) = -\xx(i)$.
With this signing, $M(\calV, \zz, \ww)$ becomes the all-0 matrix.

Now, assume that the set splitting instance is $\gamma$-unsatisfiable.
Let $K$ be the set of $i$ for which $\zz(i,h)$ is not constant in $h$.
That is, for which there exist $h$ and $\tilde{h}$ for which $\zz(i,h) \neq \zz(i,\tilde{h})$.
Lemma~\ref{lem:diag1} tells us that at least $k = \sizeof{K}$ of the diagonals  of $M(\calV, \zz, \ww)$ indexed by $B$ have absolute value at least $1/50$.
As the dimension of $M(\calV, \zz, \ww)$ is at most $4m$, it suffices to prove that at least $(\gamma / 3) m$ of its diagonals have have absolute value at least $1/50$.

If $k \geq (\gamma / 3) m$, this finishes the proof.
If not, define $\zzhat(i, h) = \zz(i,1)$ for $i \not \in K$,
  and $1 \leq h \leq 4$,
  and set $\zzhat(i, h) = 1$ for $i \in K$.
As the set splitting instance is $\gamma$-unsatisfiable, at least $\gamma m$ of the diagonals of $M (\calV, \zzhat, \ww)$
  in columns in $A$ have absolute value at least $1/4$.
It remains to see how these diagonals change between $\zzhat$ and $\zz$.

For each $i$, the variables $\zz(i,h)$ only appear in $3$ diagonals indexed by $A$. So, $M (\calV, \zz, \ww)$ and $M (\calV, \zzhat, \ww)$ can differ in at most $3 k$ diagonals in columns in $A$.
Thus, at least $\gamma m - 3k$ diagonals of $M (\calV, \zz, \ww)$ in columns of $A$ have absolute value at least $1/4$.
In total, we find that the number of diagonals that have absolute value at least $1/50$ is at least 
\[
\gamma m - 3 k + k \geq    (\gamma / 3) m,
\]
for $k \leq (\gamma / 3) m$.
\end{proof}

For the second reduction, we employ a family of matrices constructed by projecting the signed edge-vertex adjacency matrix of a complete graph on $k$ vertices onto a $k-1$ dimensional space. 

Fix an integer $k$.
We let $\Pi$ be a $k-1$-by-$k$ matrix whose rows are an orthonormal basis of the 
  nullspace of the all-$1$ vector in $k$ dimensions.
Let $B$ be the $k$-by-$\binom{k}{2}$ matrix whose columns contain all $\binom{k}{2}$ vectors with two non-zero entries, the first of which is $1$ and the second of which is $-1$.
Our reduction uses the matrix $G \defeq \Pi B / \sqrt{k}$.

\begin{lemma}\label{lem:Gmatrix}
The matrix $G$ is a $(k-1)$-by-$\binom{k}{2}$ matrix such that
\begin{enumerate}
  \item [a.] every column of $G$ has norm $\sqrt{2 / k}$,
  \item [b.] $G G^{*} = I$, and
  \item [c.] for every $\binom{k}{2}$-dimensional square diagonal matrix $D$,
  \[ 
    \norm{G D G^{*}} \geq \frac{1}{k} \sqrt{\frac{2}{k-1}} \norm{D}_{F},
  \]
  where $\norm{D}_{F}$ is the Frobenius norm of $D$---the square root of the sum of the squares of its entries.
\end{enumerate}
\end{lemma}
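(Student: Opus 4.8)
The plan is to dispatch (a) and (b) by direct computation and to concentrate the work on (c).

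For (a): the column of $G$ indexed by an edge $\{i,j\}$ equals $\tfrac{1}{\sqrt k}\,\Pi(\ee_i-\ee_j)$, and since $\ee_i-\ee_j$ is orthogonal to $\oone$ while the rows of $\Pi$ form an orthonormal basis of that nullspace, $\Pi$ preserves the norm of $\ee_i-\ee_j$; hence the column has norm $\norm{\ee_i-\ee_j}/\sqrt k=\sqrt{2/k}$. For (b): $BB^{*}=\sum_{\{i,j\}}(\ee_i-\ee_j)(\ee_i-\ee_j)^{*}$ is the Laplacian $kI_k-\oone\oone^{*}$ of the complete graph $K_k$, so $GG^{*}=\tfrac1k\,\Pi(kI_k-\oone\oone^{*})\Pi^{*}=\tfrac1k\bigl(k\,\Pi\Pi^{*}-(\Pi\oone)(\Pi\oone)^{*}\bigr)=I_{k-1}$, using $\Pi\oone=\zzero$ and $\Pi\Pi^{*}=I_{k-1}$.

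For (c), I would first rewrite $GDG^{*}=\tfrac1k\,\Pi L_{D}\Pi^{*}$, where $L_{D}\defeq BDB^{*}=\sum_{e=\{i,j\}}D_{ee}\,(\ee_i-\ee_j)(\ee_i-\ee_j)^{*}$ is the Laplacian of $K_k$ with (possibly negative) edge weights given by the diagonal of $D$. Then I would pass from the operator norm to the Frobenius norm. Because the columns of $B$ sum to zero we have $B^{*}\oone=\zzero$, hence $G^{*}G=\tfrac1k B^{*}\Pi^{*}\Pi B=\tfrac1k B^{*}B$; a short computation with the cyclic property of the trace then gives $\norm{GDG^{*}}_{F}=\tfrac1k\norm{L_{D}}_{F}$. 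Since $GDG^{*}$ is a $(k-1)$-by-$(k-1)$ matrix it has at most $k-1$ nonzero singular values, so $\norm{GDG^{*}}\ge\norm{GDG^{*}}_{F}/\sqrt{k-1}=\norm{L_{D}}_{F}/(k\sqrt{k-1})$.

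It then remains to prove $\norm{L_{D}}_{F}\ge\sqrt2\,\norm{D}_{F}$, and here the Laplacian structure is exactly what is needed: the off-diagonal entry of $L_{D}$ in position $(a,b)$ equals $-D_{\{a,b\}}$ (the diagonal entry of $D$ indexed by the edge $\{a,b\}$), so the off-diagonal entries alone contribute $\sum_{a\neq b}D_{\{a,b\}}^{2}=2\norm{D}_{F}^{2}$ to $\norm{L_{D}}_{F}^{2}$, while the diagonal entries of $L_{D}$ (the signed vertex degrees $\sum_{e\ni a}D_{ee}$) contribute a nonnegative amount. Combining this with the previous estimate yields $\norm{GDG^{*}}\ge\sqrt2\,\norm{D}_{F}/(k\sqrt{k-1})=\tfrac1k\sqrt{\tfrac{2}{k-1}}\,\norm{D}_{F}$, which is (c).

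The step I expect to be the crux is the reformulation itself: recognizing that one should route $GDG^{*}$ through the weighted Laplacian $L_{D}$ and bound its Frobenius norm, rather than try to estimate $\norm{GDG^{*}}$ directly. A naive attempt to lower-bound $\norm{GDG^{*}}$ by expanding $\norm{GDG^{*}}_{F}^{2}$ in terms of the Gram matrix $G^{*}G$ (the signed line-graph matrix of $K_k$) produces cross terms between distinct incident edges that can be large and negative, which defeats the bound for $k>4$; passing to $L_{D}$ isolates the clean contribution $2\norm{D}_{F}^{2}$ from its off-diagonal entries and makes the diagonal contribution manifestly harmless. Everything else is routine bookkeeping about $\Pi$ acting as an isometry on $\oone^{\perp}$ and $B$ being the signed incidence matrix of $K_k$.
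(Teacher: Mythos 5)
Your proof is correct and follows essentially the same route as the paper: parts (a) and (b) are the same direct computations, and for (c) you bound the operator norm by the Frobenius norm over $\sqrt{k-1}$, relate $\norm{GDG^*}_F$ to $\norm{BDB^*}_F$ via $\Pi$ being an isometry on $\oone^\perp$, and observe that each diagonal entry of $D$ appears twice among the off-diagonals of $BDB^*$, giving $\norm{BDB^*}_F^2 \ge 2\norm{D}_F^2$. The only cosmetic difference is that you phrase the middle step as an exact equality $\norm{GDG^*}_F = \tfrac1k\norm{BDB^*}_F$ (which is indeed true) whereas the paper records only the inequality it needs.
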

\begin{proof}
As every column of $B$  has sum 0, it is orthogonal to the all-ones vector and
  so multiplying by $\Pi$ does not change its norm.
As these columns have norm $\sqrt{2}$, the columns of $G$ have norm $\sqrt{2 / k}$.

To compute $G G^{*}$, first observe that $B B^{*} = k I - J$, where $J$ is the all-ones matrix of dimension $k$.  This matrix has eigenvalue $k$ with multiplicity $k-1$ and one eigenvalue of $0$.
As $\Pi$ is a projection orthogonal to the nullspace of $B$, 
$\Pi B B^{*} \Pi$ equals $k I_{k-1}$.

Every diagonal entry of $D$ appears twice as an off-diagonal of the matrix 
  $B D B^{*}$.
One easy way to see this is to index the columns of $B$ by pairs $(i,j)$ with $i < j$,
  where column $(i,j)$ equals $\ee_{i} - \ee_{j}$.
If we index the diagonal entries of $D$ similarly and label them $d_{i,j}$ we have
\[
B D B^{*} = \sum_{i < j} d_{i,j} (\ee_{i} - \ee_{j}) ( \ee_{i} - \ee_{j})^{*}.
\]
Thus, 
\[
  \norm{B D B^{*}}_{F}^{2} \geq 2 \norm{D}_{F}^{2}.
\]
As the columns of $B$ have sum 0, they lie in the span of the rows of $\Pi$.
So,
\[
  \norm{\Pi B D (\Pi B)^{*}}_{F}^{2} = \norm{B D B^{*}}_{F}^{2},
\]
and we may conclude that
\[
  \norm{G D G^{*}}_{F}^{2} \geq \frac{1}{k^{2}} \norm{B D B^{*}}_{F}^{2}.
\]
As the Frobenius norm is the sum of the squares of the $(k-1)$ eigenvalues of 
  $G D G^{*}$,
\[
  \norm{G D G^{*}}_{2}^{2} \geq \frac{1}{k-1} \norm{G D G^{*}}_{F}^{2}
  \geq \frac{1}{(k-1) k^{2}} \norm{B D B^{*}}_{F}^{2}
    \geq \frac{2}{(k-1) k^{2}}  \norm{D}_{F}^{2}.  \qedhere
\]
\end{proof}

We now describe the second reduction.
Let $\calV$ be the set of vectors produced by the first reduction and described by Lemma \ref{lem:reduction1},
  and let $m_{1}$ be the dimension of the space in which they reside.
We now partition the coordinates of these vectors, $\setof{1, \ldots, m_{1}}$
  into at most $22$ classes so that for each vector and each class, 
  the vector has at most one non-zero entry a coordinate in that class.
To see that this is possible, and that such a partition is computable efficiently, note this this is a problem of 22-coloring a graph with maximum degree at most 21: the vertices are the coordinates, the edges go between coordinates that are in the support of the same vector, and the graph has degree at most 21.
So, a greedy coloring algorithm will do the job.
Let $C_{1}, \ldots, C_{22}$ be the classes of coordinates, and let 
  $c_{i} = \sizeof{C_{i}}$ for each $i$.

Given a choice of $k$, we would like to partition each class $C_{i}$ into sets of size $\binom{k}{2}$.
As this is not necessarily possible, for each $i$ let $a_{i}$ be the integer between $0$ and 
  $\binom{k}{2}-1$ so that $c_{i} + a_{i}$ is divisible by $\binom{k}{2}$, and let
  $a = \sum_{i} a_{i}$.
We add $a$ additional coordinates, and assign $a_{i}$ of them to class $C_{i}$ for each $i$.
Let $m_{2} = m_{1} + a$ be the number of coordinates after these are added.
We then create a new list of vectors, $\calU$ by 
\begin{itemize}
  \item embedding each vector of $\calV$ into the $m_{2}$ dimensional space by setting each extra coordinate to 0, and
  \item for each of the $a$ new coordinates, $j$, adding $4$ vectors $\rr_{j,h} = (1/2) \ee_{j}$ for $1 \leq h \leq 4$.
\end{itemize}
The list of vectors $\calU$ is $(1/4)$-Weaver.
If $W (\calV) = 0$, then $W(\calU) = 0$ as well: use the same signing for each vector derived from $\calV$,
  and then for each new coordinate $j$ assign half of the $\rr_{j,h}$ a positive sign and half a negative sign.

Now, partition each class of coordinates into groups of size $\binom{k}{2}$, and call the resulting 
  $l \defeq m_{2} / \binom{k}{2}$ classes $D_{1}, \ldots, D_{l}$.
We now describe a rectangular matrix $F$ with $m_{2}$ columns and 
  $(k-1) l$ rows.
Partition the rows of $F$ into $l$ sets of size $k-1$, which we call $E_{1}, \ldots, E_{l}$.
This partition can be arbitrary, but to ease visualization one could make each set consecutive.
We define $F$ to be zero everywhere, except on submatrices consisting of rows indexed by 
  $E_{i}$ and the columns indexed by $D_{i}$, on which it equals $G$.
The final set of vectors produced by our reduction, $\calW$, is the result of multiplying each vector in $\calU$ by $F$.
  
\begin{lemma}\label{lem:reduction2}
Let $\calV$ be the set of vectors produced by the first reduction and analyzed in Lemma~\ref{lem:reduction1}.
Let $m_{1}$ be the dimension of the space in which the vectors in $\calV$ lie, and assume that
  $m_{1} \geq 22 \binom{k}{2}$.
Let $\calW$ be the result of the second reduction.
The vectors $\calW$ are $1/2k$-Weaver.
If $W(\calV) = 0$, then $W(\calW) = 0$.
If for every $\pm 1$ vector $\xx$ at least a 
  $\phi$ fraction of the diagonals of $M(\calV, \xx)$
  have absolute value greater than $\delta$,
then 
\[
W(\calW) \geq \delta \sqrt{\frac{\phi}{2k}}
\]
\end{lemma}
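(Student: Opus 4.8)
The plan is to verify the three assertions in turn; the operator-norm bound is the only one with real content. For the Weaver property I would argue as follows. Every vector $\vv\in\calU$ has norm $1/2$ and, by the way the classes $C_1,\dots,C_{22}$ were constructed, at most one nonzero coordinate in each class, hence at most one in each group $D_s$ (each $D_s$ sits inside a single class). So $F\vv$ is supported on the row-blocks $E_s$ that meet the support of $\vv$, and on each of those it equals $\vv(j)$ times a column of $G$, which has norm $\sqrt{2/k}$ by Lemma~\ref{lem:Gmatrix}(a); summing over these disjoint blocks gives $\norm{F\vv}^2=(2/k)\norm{\vv}^2=1/(2k)$. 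Since $\calU$ is $1/4$-Weaver, $\sum_{\ww\in\calW}\ww\ww^*=F\bigl(\sum_{\vv\in\calU}\vv\vv^*\bigr)F^*=FF^*$, and the block-diagonal structure of $F$ together with $GG^*=I$ (Lemma~\ref{lem:Gmatrix}(b)) shows $FF^*=I$. Hence $\calW$ is $1/2k$-Weaver. The implication $W(\calV)=0\Rightarrow W(\calW)=0$ is then immediate: we already know $W(\calU)=0$, and applying the same signing to $\calW$ turns the signed sum into $F\zzero F^*=\zzero$.

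For the lower bound, fix an arbitrary signing of the vectors of $\calW$. Via the bijection $\vv\mapsto F\vv$ it corresponds to a signing of $\calU$; let $N$ be the resulting signed sum of outer products over $\calU$, so the signed sum over $\calW$ is $FNF^*$. The key observation is that, since each vector of $\calU$ has at most one nonzero coordinate in each class, the restriction of $N$ to the rows and columns of any one class is a \emph{diagonal} matrix---each rank-one term $\pm\vv\vv^*$ contributes at most one diagonal entry there. In particular, the restriction of $N$ to any group $D_s$ is a $\binom{k}{2}$-dimensional diagonal matrix $D_s'$, and the principal submatrix of $FNF^*$ on rows and columns $E_s$ equals $GD_s'G^*$. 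Since an operator norm is at least that of any submatrix, Lemma~\ref{lem:Gmatrix}(c) gives
\[
\norm{FNF^*}\;\ge\;\max_s\norm{GD_s'G^*}\;\ge\;\frac1k\sqrt{\frac{2}{k-1}}\,\max_s\norm{D_s'}_F .
\]

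It remains to bound $\max_s\norm{D_s'}_F$ by averaging. The groups partition the $m_2$ coordinates and each $D_s'$ is diagonal, so $\sum_s\norm{D_s'}_F^2$ equals the sum of squares of all diagonal entries of $N$, which is at least the same sum over the first $m_1$ coordinates. On those coordinates $N$ agrees with $M(\calV,\zz)$, where $\zz$ is the signing of $\calV$ induced by our signing of $\calW$, because the added vectors $\rr_{j,h}$ are supported only on new coordinates. By the hypothesis applied to the $\pm1$ vector $\zz$, at least $\phi m_1$ of these diagonal entries exceed $\delta$ in absolute value, so $\sum_s\norm{D_s'}_F^2\ge\phi\delta^2 m_1$, and since there are $l=m_2/\binom{k}{2}$ groups, $\max_s\norm{D_s'}_F^2\ge\binom{k}{2}\phi\delta^2 m_1/m_2$. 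Substituting into the displayed inequality and using $2\binom{k}{2}=k(k-1)$ collapses the constant to $1/\sqrt{k}$, giving $\norm{FNF^*}\ge\delta\sqrt{(\phi/k)(m_1/m_2)}$. Finally $m_2=m_1+\sum_i a_i<m_1+22\binom{k}{2}\le 2m_1$ by the standing assumption $m_1\ge 22\binom{k}{2}$, so $m_1/m_2>1/2$ and $\norm{FNF^*}\ge\delta\sqrt{\phi/(2k)}$; as the signing was arbitrary, $W(\calW)\ge\delta\sqrt{\phi/(2k)}$.

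The main difficulty is conceptual rather than computational: one must notice that the $22$-class coloring is exactly what forces every group-indexed principal submatrix of the signed sum $N$ to be diagonal, which is the precise input needed to apply the Frobenius-norm estimate of Lemma~\ref{lem:Gmatrix}(c). After that, the norm computation, the identity $FF^*=I$, the averaging step, and the dimension inequality $m_2<2m_1$ are all routine.
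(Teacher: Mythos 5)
Your proof is correct and follows the paper's approach: same block-diagonal structure of $F$, same observation that each $D_s$-indexed principal submatrix of $M(\calU,\xx)$ is diagonal, same use of Lemma~\ref{lem:Gmatrix}(c) on a single block. The only cosmetic difference is in the pigeonhole step---the paper picks a block $D_i$ whose diagonals contain at least a $\phi/2$ fraction of large entries, while you average the Frobenius norms $\sum_s\norm{D_s'}_F^2$ over all blocks and divide by $l$---but these are the same estimate and yield the identical constant.
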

\begin{proof}
We exploit the algebraic characterization of the second reduction:
\[
M(\calW, \xx) = F M(\calU, \xx) F^{*}.
\]
This immediately tells us that an $\xx$ that makes the right side zero will also make the left side zero.
It also implies that
  for every $i$ the submatrix of $M(\calU, \xx)$ indexed by rows and columns in $D_{i}$
  is diagonal.
This is because every vector in $\calU$ has at most one nonzero entry indexed by $D_{i}$,
  and the matrix $M(\calU, \xx)$ is a signed sum of outer products of vectors in $\calU$.

To see that $\calW$ is $1/2k$-Weaver, we first compute the norms of these vectors. 
As the non-zero entries of each vector in $\calU$ appear in disjoint blocks, and every column of $F$ has norm $\sqrt{2/k}$, the squared norm of $F$ times any vector in $\calU$ is $(2/k)$ times the squared norm of that vector: $(1/4)(2/k) = 1/2k$.
Also note that $F F^{*} = I$, so 
\[
  M(\calW, \bvec{1}) = F M(\calU, \bvec{1}) F^{*}
    =   F I F^{*} = I.
\]

Consider a vector
  $\xx$ for which at least a 
  $\phi$ fraction of the diagonals of $M(\calV, \xx)$
  have absolute value at least $\delta$.
Note that $a \leq 22 \binom{k}{2}$, so the assumption that 
  $m_{1} \geq 22 \binom{k}{2}$ implies $m_{2} \leq 2 m_{1}$.
This means that
  at least a $\phi / 2$ fraction of the diagonals of 
  $M(\calU, \xx)$ have absolute value at least $\delta$.
As the sets $D_{1}, \ldots, D_{l}$ partition the columns of this matrix,
  there must be some set of columns $D_{i}$
  such that at least a $\phi / 2$ fraction of the diagonals 
  in the rows and columns indexed by $D_{i}$ have absolute value at least $\delta$.
Call this submatrix $M_{i}$, and notice that it has squared Frobenius norm
  at least $\binom{k}{2} \delta^{2} \phi / 2$.
So,
\[
\norm{M(\calW, \xx)} = \norm{F M(\calU, \xx) F^{*}}
\ge \norm{G M_i G^*}
\ge \frac{1}{k} \sqrt{\frac{2}{k-1}} \norm{M_i}_F
\ge \delta \sqrt{\frac{\phi}{2k}}
\]
where the second-to-last inequality follows from part $c$ of Lemma \ref{lem:Gmatrix}.
This implies $W(\calW) \ge \delta \sqrt{\frac{\phi}{2k}} $.
\end{proof}

\begin{proof}[Proof of Theorem~\ref{thm:general}]
On input an instance of the (3,2-2) Set Splitting Problem, 
  let $\calV$ be the set of vectors produced by the first reduction,
  and let $\calW$ be the set of vectors produced by the second.
By applying Lemmas~\ref{lem:reduction1} and~\ref{lem:reduction2}, 
  we see that if the instance is satisfiable, then 
  $W(\calV) = W(\calU) = W(\calW) = 0$.
On the other hand, if the instance is $\gamma$-unsatisfiable, then 
  Lemma~\ref{lem:reduction1} implies that
  for all $\pm 1$ vectors $\zz$ and $\ww$ at least a $\phi = \gamma / 12$
  fraction of the diagonal entries of $M(\calV, \zz, \ww)$ have absolute 
  value at least $\delta = 1/50$.
Lemma~\ref{lem:reduction2}, then allows us to conclude that 
  \[ W(\calW) \geq \frac{1}{50} \sqrt{\frac{\gamma}{ 24 k}} = \kappa / \sqrt{k}, \]  
where
\[ \kappa \defeq \frac{1}{100} \sqrt{\frac{\gamma}{6}}. \]
So, the problem of distinguishing whether a (3,2-2) set splitting instance is satisfiable 
  or $\gamma$-unsatisfiable is polynomial-time reducible to the problem of distinguishing 
  a set of $1/2k$-Weaver vectors $\calW$ with $W(\calW) = 0$ from a set for which 
  $W(\calW) \geq \kappa /  \sqrt{k}. $
\end{proof}

We remark that this construction can be carried out whenever the original (3,2-2) Set Splitting instances has a number of sets that exceeds $22 \binom{k}{2}$.
This will result in a number of vectors that is a most a constant times the number of sets.
Thus, we only require that $k$ be at least some constant times the square root of the number of vectors.

\section{Remarks}
We first emphasize that our hardness results do not say that it is hard to find an $\xx$ 
  giving an operator norm at or above the guarantee provided by \cite{marcusSS15,BCMS19}.
We only prove that it is hard to improve on this guarantee by a constant factor.

The original form of Weaver's conjecture $KS_{2}$ states that there exist constants 
  $\alpha > 0 $ and $\beta < 1$ 
  such that for vectors $\vv_{i}$ of norm at most $\sqrt{\alpha}$ whose outer products have sum 
  with operator norm less than $1$, there exists a partition of those vectors into two sets so that 
  in each set the sum of the outer products has operator norm at most $\beta$.
These vectors could differ from those in $\alpha$-Weaver position in that the sum of their outer products does not need to equal the identity.
Weaver proved that the conjecture is unchanged if one requires the sum of the outer products of the vectors to be the identity.
Instead of considering the sum of the outer products in each set, we consider the difference of the sum of the outer products by assigning a $+1$ to every vector in one set and a $-1$ to every vector in the other.
However, when we consider such signed sums the condition that the sum of the outer products is the identity is no longer equivalent to the condition that the sum has operator norm at most $1$.
To prove an upper bound on the discrepancy for vectors of bounded norm whose sum of outer products has operator norm at most $1$, one can use the results of Kyng, Luh, and Song \cite{KLS}.
Instead of outer products of vectors, Cohen~\cite{Cohen16} and Br{\"a}nd{\'e}n~\cite{branden18} have shown that it is possible to prove analogous discrepancy results for sums of positive semidefinite matrices of bounded trace.

One may wonder what to make of our results when the vectors $\calW$ produced are not rational,
  because it is not clear that they can be represented exactly, and thus their representation in floating point might not be precisely $\alpha$-Weaver.
One way to fix this is to round them to floating point numbers, and then apply a linear transformation that forces the sum of their outer products to be the identity.
If done with enough precision, this will cause their norms to increase negligibly.
We also observe that the vectors can be made rational whenever $k$ is a square.
If $k = s^{2}$, then one can choose $\Pi$ to be
 the horizontal concatenation of the vector $-\bvec{1}_{k-1} / s$
 with the matrix $I_{k-1} - J_{k-1} (s / (k(s+1)))$, where $J_{k-1}$
 is the $(k-1)$-dimensional square matrix with all entries $1$.

\bibliographystyle{alpha}
\bibliography{pz_ref}

\appendix 

\section{Hardness of 2-2 Set Splitting}

The purpose of this section is to sketch a simple proof that it is NP-hard to distinguish satisfiable (3,2-2) set splitting instances from $\gamma$-unsatisfiable ones, for some constant $\gamma > 0$.

We first sketch a proof that 2-2 Set Splitting is NP-hard.
We then explain why it is hard to distinguish satisfiable instances 
from $\gamma$-unsatisfiable ones, for some constant $\gamma > 0$,
  even when each variable appears in at most a constant number of sets.
  
Our notation follows that of H{\ao}stad~\cite{haastad01} and Guruswami~\cite{guruswami04}. 
Whereas the purpose of those papers is to obtain tight hardness of approximation results, our purpose in this appendix is just to obtain simple proofs of hardness up to some constant.

We begin by recalling the NP-hardness of E3-SAT: 3-SAT in which every clause contains exactly 3 distinct variables.
We will reduce this to NAE-E3-SAT, where we recall that 
  the NAE-SAT problem consists of not-all-equal clauses that are satisfied when their terms are not all equal, and NAE-E$k$-SAT is the restriction of NAE-SAT to instances in which every clause contains exactly $k$ distinct variables.

The standard reduction from E3-SAT to NAE-E4-SAT is obtained by creating one extra variable, $z$, and replacing every clause in a SAT instance with an NAE clause that contains the same terms along with $z$.
If the SAT instance is satisfied by an assignment $\xx$, then the NAE-SAT instance is satisfied by the same assignment and $z$ set to false.
Conversely, observe that satisfying assignments of NAE-SAT instances remain satisfying if one negates all the variables.
So, if the NAE-E4-SAT instance is satisfiable, we may assume that $z$ is false and that the remaining variables provide a satisfying assignment to the SAT instance.

We then reduce the NAE-E4-SAT instance to an NAE-E3-SAT instance by splitting up each NAE clause. For each NAE clause with terms $t_{1}, t_{2}, t_{3}, t_{4}$, we introduce one new variable $y$, and then replace the clause with two clauses:
  one with terms $t_{1}, t_{2}, y$ and one with $\overline{y}, t_{3}, t_{4}$.
  
To reduce NAE-E3-SAT to 2-2 Set Splitting, we first show that we can reduce it to an NAE-E3-SAT problem in which no variable is negated in any NAE clause.
We may accomplish this by introducing a gadget that forces variables to be the negations of each other.
To force variables $x$ and $y$ to be negations of each other, we introduce extra variables 
  $a$, $b$, and $c$, and include the NAE-E3 clauses
  \[
  (x, y, a), (x, y, b), (x, y, c), (a, b, c).
  \]
If $x$ is the negation of $y$, then these clauses are satisfied by any choice of $a$, $b$, and $c$ that are not all equal. Conversely, if $a$, $b$, and $c$ are not all equal then these clauses can only be satisfied if $x$ differs from $y$.

Finally, we may reduce NAE-E3-SAT to 2-2 Set Splitting by appending one extra variable to every clause, and making the result a set to be split.

If every variable occurs at most a constant number of times in the original E3-SAT instance, then every variable will occur at most a constant number of times in the 2-2 Set Splitting instance, except for the variable $z$ which was added in the reduction from E3-SAT to NAE-E4-SAT.
To fix this, we replace the variable $z$ with many variables, and then force them all to be equal. In particular, if the E3-SAT instance has $m$ clauses, we introduce variables $z_{1}, \ldots, z_{m}$, and add one to each clause to create an NAE-E4-SAT clause.
We must then introduce gadgets that force those variables to be equal.
For simplicity, for each $1 \leq j < m$, we could introduce a new variable $w_{j}$,
  and include the NAE-E3 clauses that force $z_{j} \neq w_{j}$ and $w_{j} \neq z_{j+1}$.
Of course, we do not need to split these clauses when we reduce the other NAE-E4-SAT clauses to NAE-E3-SAT clauses.

To preserve constant-factor unsatisfiability, we add more constraints than this to the variables 
  $z_{1}, \ldots, z_{m}$.
First, we recall the formulation by H{\ao}stad~\cite[Theorem 2.24]{haastad01} of one of the main results of 
  Arora \textit{et. al.} \cite{almss}:

\begin{theorem}
There exists a constant $c > 0$ such that it is NP-hard to 
distinguish a satisfiable E3-SAT instance in which every variable appears in at most 5 clauses
  from one that is $c$-unsatisfiable.
\end{theorem}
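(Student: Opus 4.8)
The plan is to obtain this statement from the PCP theorem together with a standard expander‑based degree‑reduction argument; I will only sketch it, since the result is not new. The starting point is the Arora \textit{et al.} proof \cite{almss} of the PCP theorem, which yields a constant $c_{0} > 0$ such that it is NP‑hard to distinguish a satisfiable E3‑SAT instance from one in which no assignment satisfies more than a $(1 - c_{0})$ fraction of the clauses. In this form there is no bound on the number of clauses in which a variable may appear, so the remaining work is to reduce the variable degree to a constant while preserving a constant gap.

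To do this, take an E3‑SAT instance $\varphi$ with clauses $C_{1}, \ldots, C_{m}$, and for each variable $x$ that occurs $d$ times replace its $d$ occurrences by $d$ fresh copies $x_{1}, \ldots, x_{d}$, one per occurrence; since each clause of $\varphi$ has three distinct variables, the copies appearing in any substituted clause are again distinct. To keep the copies consistent, fix a constant‑degree expander on the set $\{x_{1}, \ldots, x_{d}\}$ and, for every edge $\{x_{i}, x_{j}\}$ of this expander, append a constant‑size collection of E3 clauses that is satisfiable exactly when $x_{i} = x_{j}$, e.g. $(\overline{x_{i}} \vee x_{j} \vee w),\ (\overline{x_{i}} \vee x_{j} \vee \overline{w}),\ (x_{i} \vee \overline{x_{j}} \vee w),\ (x_{i} \vee \overline{x_{j}} \vee \overline{w})$ with a fresh auxiliary $w$. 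In the resulting instance each copy variable appears in one substituted clause plus a constant number of equality gadgets, each auxiliary appears only in its own gadget, and an explicit expander family makes the construction run in polynomial time. Completeness is immediate: a satisfying assignment of $\varphi$ becomes a satisfying assignment of the new instance by giving every copy of $x$ the value of $x$ and choosing each auxiliary to satisfy its gadget.

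For soundness, given any assignment $\sigma$ of the new instance, let $\sigma^{*}$ be the assignment of $\varphi$ giving each original variable the majority value of its copies under $\sigma$. If $\sigma$ leaves only an $\epsilon$ fraction of all clauses unsatisfied then it violates few equality gadgets, and by the edge expansion of the graph on the copies of each variable this forces all but a small fraction of copies to agree with the majority; hence $\sigma$ and $\sigma^{*}$ disagree on only a small fraction of the substituted clauses. Combining the two estimates, $\sigma^{*}$ satisfies a $(1 - O(\epsilon))$ fraction of $\varphi$ whenever $\sigma$ satisfies a $(1 - \epsilon)$ fraction of the new instance; taking the contrapositive, a $(1 - c_{0})$‑unsatisfiable $\varphi$ produces a $c$‑unsatisfiable instance for a constant $c = \Omega(c_{0})$.

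The step requiring the most care is pinning the occurrence bound down to exactly $5$: this needs a frugal choice of expander and equality gadget, possibly iterating the degree reduction, all while keeping every clause in strict E3 form, which is exactly what is carried out in \cite[Theorem 2.24]{haastad01}. For the purposes of this appendix any constant occurrence bound together with an $\Omega(1)$ gap would suffice, so these optimizations are not essential to our application.
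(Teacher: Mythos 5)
The paper does not prove this theorem; it simply recalls it as H{\aa}stad's reformulation \cite[Theorem 2.24]{haastad01} of the PCP theorem of Arora \textit{et al.}\ \cite{almss}, and uses it as a black box in the Appendix. Your proposal supplies a proof sketch for the result that the paper leaves entirely to citation, so there is no proof in the paper to compare against. That said, your outline is the standard one and it is correct: start from PCP-based constant-gap hardness of E3-SAT, replace each occurrence of a variable by a fresh copy, link the copies by a constant-degree explicit expander, enforce each edge with an E3 equality gadget, and argue via edge expansion that an assignment violating few equality clauses has almost all copies agreeing with the majority, so the majority assignment of the original instance violates $O(\epsilon)$ of its clauses whenever the new assignment violates $\epsilon$. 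Your four-clause gadget $(\overline{x_{i}} \vee x_{j} \vee w),\ (\overline{x_{i}} \vee x_{j} \vee \overline{w}),\ (x_{i} \vee \overline{x_{j}} \vee w),\ (x_{i} \vee \overline{x_{j}} \vee \overline{w})$ does force $x_{i}=x_{j}$ and keeps strict E3 form.

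One quibble worth noting: in that gadget each of $x_{i}$ and $x_{j}$ occurs in all four clauses, so with a degree-$d$ expander a copy occurs in $1 + 4d$ clauses, which for any genuine expander is well above $5$. You acknowledge this and defer the reduction to exactly $5$ to H{\aa}stad, which is fair, and you correctly observe that for the appendix's application any constant occurrence bound with an $\Omega(1)$ gap would do. So the sketch is sound at the level of detail the paper itself demands; it does not, and does not claim to, reprove the sharp ``at most $5$'' constant.
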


The reductions we have described so far convert satisfiable E3-SAT instances to satisfiable 2-2 set splitting instances, and they ensure that if each variable appears at most 5 times in the original instance, then each variable appears in at most a constant number of sets in the set splitting instance. To make sure that each $c$-unsatisfiable E3-SAT instance is converted into a $c'$-unsatisfiable NAE-E3-SAT instance, we impose equality relations between $z_{1}, \ldots, z_{m}$ in the pattern of an expander graph.

For example, we could use a 4-regular Ramanujan graph~\cite{Margulis,LPS} 
  on $m$ or slightly more than $m$ vertices.
If the graph has exactly $m$ vertices, then for every edge $(i,j)$ in the graph, we
  use the gadgets described above to force $z_{i} = z_{j}$.
If the graph has more than $m$ vertices, when we introduce even more copies of $z$ so that we have one for each vertex, and then proceed as before.
The gadgets ensure that if more than $k$ of the copies of $z$ differ from the majority, then at least $\omega k$ of the clauses in the gadgets will be unsatisfied, for some $\omega > 0$.
If the E3-SAT instance is $c$-unsatisfiable and only a small enough fraction of the copies of $z$ disagree with the majority, then some constant fraction of the other NAE-E3-SAT clauses must be unsatisfied.

As the other parts of the reduction only involve a constant number of locally substituted clauses, we may prove as in Lemma~\ref{lem:three} that the $c$-unsatisfiable E3-SAT instances become constant-unsatisfiable 2-2 set splitting instances. 
As each variable appears in at most a constant number of sets, we can then use 
  Lemma~\ref{lem:three} to ensure that each variable occurs in at most three sets.

\end{document}